\documentclass[%
superscriptaddress,
nofootinbib,
 amsmath,amssymb,
 aps,
]{revtex4-2}

\usepackage{uniinput}
\usepackage{comment}
\usepackage[english]{babel}
\usepackage{graphicx} 
\usepackage{dcolumn} 
\usepackage{bm}
\usepackage{bbm}
\usepackage{bbold}
\usepackage{xcolor}
\usepackage{multirow}
\usepackage{booktabs}
\usepackage{array}
\usepackage{makecell}
\usepackage{siunitx} 
\usepackage{hyperref}
\usepackage[capitalise]{cleveref}
\usepackage{amsthm}
\newtheorem{proposition}{Proposition}
\newtheorem{proposition_restatement}{Proposition}

\newcommand{\Tr}{\operatorname{Tr}}
\newcommand{\norm}[1]{\left\lVert#1\right\rVert}
\newcommand{\tnorm}[1]{\norm{#1}_1}
\newcommand{\Lind}{\mathcal{L}}
\newcommand{\re}{\rm{Re}\,}
\newcommand{\im}{\rm{Im}\,}
\newcommand{\ad}{\text{ad}}
\DeclareRobustCommand{\circled}[1]{\ensuremath{\text{\tikz[baseline=(c.base)]
    \node[draw, circle, inner sep=0.6pt, minimum size=1.35em, line width=0.4pt] (c)
    {\scriptsize #1};}}}
\AtBeginDocument{\pdfstringdefDisableCommands{\def\circled#1{(#1)}}}

\usepackage{xcolor}
\definecolor{riverlane_green}{RGB}{0, 111, 98}
\definecolor{riverlane_light_green}{RGB}{0, 150, 143}
\definecolor{riverlane_orange}{RGB}{255, 117, 0}
\definecolor{riverlane_red}{RGB}{220, 68, 5}
\definecolor{riverlane_pink}{RGB}{207, 111, 127}
\usepackage{hyperref}
\hypersetup{
  colorlinks   = true, 
  urlcolor     = riverlane_green, 
  linkcolor    = riverlane_green, 
  citecolor   = riverlane_green  
}

\usepackage{tikz}
\usetikzlibrary{quantikz} 

\begin{document}

\title{Fault-tolerant resource estimation for ground-state preparation via Lindblad simulation}

\author{Marius Bothe$^*$}
\affiliation{Riverlane, Cambridge, CB2 3BZ, UK}
\affiliation{Astex Pharmaceuticals, 436 Cambridge Science Park, Cambridge, CB4 0QA, UK}
\author{Patrick Schoepf}
\affiliation{Astex Pharmaceuticals, 436 Cambridge Science Park, Cambridge, CB4 0QA, UK}
\date{\today}
\author{Nick S. Blunt}
\affiliation{Riverlane, Cambridge, CB2 3BZ, UK}

\begin{abstract}
Recent advances in algorithms for simulating Lindblad dynamics have clarified their theoretical potential for state preparation, but their practicality in the early fault‑tolerant regime and beyond remains uncertain. In this work, we address this by investigating the cost of preparing ground states of the fermionic Hubbard model, following the single-ancilla approach of [Phys. Rev. Research 6, 033147 (2024)]. We derive rigorous error bounds including constant prefactors, and compare to empirical error behavior and practical convergence parameters obtained from circuit‑level simulations. We find that empirically chosen parameters can reduce the required resources by orders of magnitude, resulting in an estimated $7.7 \times 10^{8}$ T gates to perform one unit of time evolution targeting the low-energy subspace of a 36-site fermionic Hubbard model. We identify the accurate filtering of energy transitions as the main source of this cost. 
\end{abstract}

\maketitle

\def\thefootnote{$^*$}\footnotetext{\href{marius.bothe@riverlane.com}{marius.bothe@riverlane.com}}
\tableofcontents

\section{Introduction}
\label{sec:intro}

In recent years there have been various algorithmic advances for the simulation of Lindblad dynamics on quantum computers: More efficient Lindblad simulation techniques \cite{yu25twoancilla, leng2026accelerating, elman2026localityfilterdesigndissipative,mizuta2026lattice, gilyen2026metropolis,shang2026exponential},  exact detailed-balance constructions \cite{chen2023efficient, chen2023quantum}, compact ground-state preparation routines \cite{ding2024single,ding2025end}, comparisons of detailed-balance conditions \cite{ding2024efficient} and various exact and numerical results on convergence times \cite{vsmid2501polynomial, tong2025fast, zhan2026rapid}. An open question remains how practically feasible Lindblad evolution will be as an algorithm in the early fault-tolerant regime and beyond. To evaluate this, in this paper we focus on the single-ancilla ground state preparation algorithm by Ding et al. \cite{ding2024single}, which we choose for three reasons: firstly, ground-state preparation is a common problem that is also a subroutine for many other algorithms such as quantum phase estimation, so it is widely applicable. Secondly, the algorithm uses the lower-overhead dilation construction of \cite{cleve1612efficient}, which is likely to be more suitable for the tight resource constraints of the early-fault-tolerant setting.
Finally, the algorithm uses a straightforward construction to ensure convergence to the ground state, which is less resource intensive than some of the exact detailed-balance constructions of \cite{chen2023efficient}. Indeed, the convergence of the single-ancilla algorithm has been studied in detail in \cite{zhan2026rapid}, where it was shown for many systems of interest that rapid mixing (i.e., logarithmic scaling of the mixing time with the system size) can be achieved, highlighting the promise of the algorithm for state preparation tasks. However, the authors stop short of determining the algorithmic parameters and compiling down to fault-tolerant resources, leaving the question of practical applications open.

To investigate this, we set out to estimate the resource requirements of this algorithm: we extend the original asymptotic results of \cite{ding2024single} by deriving the full prefactors for the scaling. With this we implemented resource estimation in Qualtran \cite{harrigan2024expressing} to fully cost the gate counts for performing Lindblad evolution for a given duration with rigorous error guarantees.

Furthermore, we perform simulations for the one-dimensional Hubbard model, whose generalizations are a leading candidate for early fault-tolerant applications \cite{campbell2022early,bay2025fault,akahoshi2025compilation,bay2026quantum}. These circuit-level simulations for small system sizes allow us to make empirical estimates for simulation errors, which are typically orders of magnitude smaller than the rigorous bounds. It also allows estimates for practical convergence times, which are difficult to prove otherwise in the zero-temperature regime for interacting fermionic systems. 
To further reduce costs, we investigate the choice of mixing operators and optimal filter functions, which ensure ergodicity and fast convergence while respecting particle number and spin symmetry. Finally, we evaluate these gate counts in the Pauli-based computation model \cite{bravyi2016trading, litinski2019game}. 

In \cref{sec:algorithm} we introduce the single-ancilla ground-state algorithm and discuss the rigorous error bounds that we derive. The total error depends on the total simulation time, which for state preparation should be chosen to be (some multiple of) the mixing time of the Lindblad evolution. Therefore, to guarantee a certain accuracy in the final state, the parameters have to be chosen with prior knowledge of the mixing time. Despite many recent results showing fast (polynomial) or even rapid (logarithmic) mixing times for Lindbladian constructions \cite{vsmid2501polynomial, ding2025end,tong2025fast, fang2025mixing,bergamaschi2025quantum} there are so far no results guaranteeing fast mixing for any classically challenging quantum system. This is not surprising, since such results rarely exist for classical sampling algorithms either, yet many classical algorithms, such as Markov chain sampling, are widely successful nonetheless. 

To investigate the practical usefulness of Lindblad state preparation we therefore turn to simulation results in \cref{sec:simulation}. For this, we implement a circuit-level simulation of the ground state preparation algorithm \cite{ding2024single} for the fermionic Hubbard model, a widely studied candidate for early quantum advantage. This simulation allows us to simulate the effects of different parameter choices on the final-state error, measured in distance from the ground-state energy. We find a convergence speed versus precision trade-off, and further identify optimal parameter choices for convergence when the spectral gap of the system Hamiltonian is known. We also simulate the effect of device noise on the evolution, and find that in most relevant applications the contractiveness of the Lindblad evolution offers no protection against device noise, contrary to the idea of algorithm-level error correction.

In \cref{sec:costing} we estimate the cost of preparing a classically challenging ground state via Lindblad evolution to an energy precision that scales with system size. For this, we bound the spectral norm and estimate the gap of the Hubbard Hamiltonian\footnote{In practice, when targeting an extensive energy error, the gap estimate becomes irrelevant, because for all but the most gapped systems the allowed energy error becomes larger than the gap, so we use that instead.}, which we feed into the error bounds to find rigorous estimates for the parameters and gate counts required. 
We also extrapolate the parameters found to reach the required accuracy for small systems and scale them according to the asymptotic scaling laws to estimate the parameters required in practice. 
The fault-tolerant resources required for both the rigorous regime and the empirically-found parameters are then evaluated by estimating the non-Clifford gate cost \cite{bravyi2016trading, litinski2019game}. The resulting resource estimates show that even for a moderately-sized system the cost is substantial, requiring $7.7 \times 10^{8}$ T gates for one unit of time evolution for a 36-site fermionic Hubbard model, with the cost of filtering the energy transitions being the main bottleneck. We conclude that for practical usefulness we still require large improvements in algorithmic efficiency, similar to the development seen in quantum phase estimation \cite{babbush2025grand}, with cheaper approximation of the filter-function integral as an important target for further optimization.

\section{The single-ancilla Lindblad simulation algorithm}
\label{sec:algorithm}

Preparing states via Lindblad simulation requires solving two problems: i) constructing a quantum algorithm that simulates the Lindblad evolution
\begin{align}
\frac{d\rho}{dt} = \mathcal{L}[\rho] ,
\end{align}
to the desired accuracy \cite{cleve1612efficient} and ii) engineering a suitable Lindbladian, 
\begin{align}
 \mathcal{L}[\rho] &= \underbrace{-i [H, \rho]}_{=\Lind_H} + \underbrace{\sum_{k} \Big(
K_k \rho K_k^\dagger
- \frac{1}{2} K_k^\dagger K_k \rho
- \frac{1}{2} \rho K_k^\dagger K_k
\Big)}_{=\Lind_K}, \label{eqn:lindblad_def}
\end{align}
where the coherent evolution operator, $H$, and the transition operators, $\{K_k\}$, are chosen so that the resulting dynamics converge to the desired fixed point, typically a finite-temperature Gibbs state \cite{chen2023quantum} or the ground state of the system \cite{ding2024single}.

\subsection{Dilation-based simulation}

For the Lindblad simulation, there are two main approaches: The theoretically-optimal scaling in simulation time and error is achieved through a linear-combination-of-unitaries construction, but it has relatively high ancilla overhead \cite{cleve1612efficient,leng2026accelerating}. A potentially more near-term approach is given by {\it dilation-based algorithms} \cite{cleve1612efficient}, which have worse scalings but require as little as a single ancilla to implement the Lindblad evolution \cite{ding2024single}. 
We focus on the latter approach, which implements a single jump operator $K$ per simulation step, drawn from a classical probability distribution. This works by implementing unitary evolution on an extended register and then tracing out the ancilla qubit, using the dilated evolution operator
\begin{align}
     \tilde{K} &= \begin{pmatrix}
            0 & K^\dagger \\
            K & 0 
        \end{pmatrix} .
\end{align}
Each step of the dissipative evolution of duration $\tau$ is implemented via the channel \cite{ding2024single}
\begin{align}
    \Phi(\tau)[\rho] := \operatorname{Tr}_a \!\left[
e^{-i \tilde{K} \sqrt{\tau}}
\left( \lvert 0 \rangle \langle 0 \rvert \otimes \rho \right)
e^{i \tilde{K} \sqrt{\tau}}
\right] .
\end{align} 
It can be shown that this implements the underlying Lindblad evolution to second order in $\tau$ (\cite[Lemma 3 and 9]{ding2024single}):
\begin{align}
    \tnorm{e^{\Lind \tau}[\rho]- e^{\Lind_H \tau}\Phi(\tau)[\rho]} = \mathcal{O}(\tau^2),
\end{align}
where $\tnorm{\, \cdot \,}$ denotes the Schatten-1 or trace norm, $\tnorm{A}=\Tr\sqrt{A^\dagger A}$. More transition operators can be applied by using additional ancillas, but usually the desired evolution can be approximated by stochastically sampling the transition operators in each step. The Lindblad simulation for arbitrary times can then be performed by chaining together these time-evolution-trace-out steps with a suitably small $\tau$, similarly to Trotterization.

\subsection{Transition operator}

\begin{figure}
    \centering
    \includegraphics[width=0.6\linewidth]{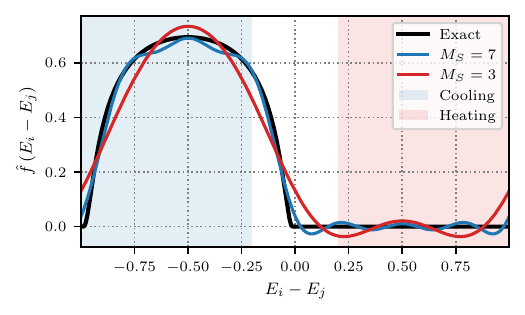}
    \caption{The energy-domain filter function $\hat{f}$ as well as its Dirichlet kernel approximations for different numbers of sample points $M_S$. The colored areas begin at the smallest possible energy transitions to and from the ground state, $\Delta$. The blue area indicates transitions that dissipate energy (cooling) and the red area indicates transitions that increase the energy (heating). }
    \label{fig:filter_fn_energy_domain}
\end{figure}

The dilation-based framework can be used for both ground state and Gibbs state preparation \cite{ding2025end}. The key design step is taking a generic jump operator $A$, such as a Pauli operator or hopping term, and convolving it with a {\it filter function} $f(s)$ that appropriately suppresses certain energy transitions:
\begin{align}
    K :=& \sum_{i,j \in [N]} \hat{f}(E_i - E_j)
\, \lvert \psi_i \rangle \langle \psi_i \rvert
A
\lvert \psi_j \rangle \langle \psi_j \rvert  \label{eqn:K_energy} \\
=& \int_{-\infty}^{\infty} f(s)\, A(s)\, ds . \label{eqn:K_integral}
\end{align}
Here, $ \{ \lvert \psi_i \rangle\}$ are the eigenstates of the system Hamiltonian, and $A(s) = e^{iHs}A e^{-iHs}$ is the time-evolved jump operator. The energy-domain filter function $\hat{f}(\omega)$ determines the allowed energy transitions. By choosing $\hat{f}(\omega) = 0$ for $\omega>0$ we guarantee that $ \lvert \psi_0 \rangle$ is a fixed point of $e^{i \sqrt{\tau}\tilde{K}}$, i.e. 
\begin{align}
K\lvert \psi_0 \rangle =  0,\label{eqn:K_fixed_point}
\end{align}
and therefore of the Lindblad dynamics.\footnote{This does not guarantee ergodicity, which has to be ensured through the choice of transition operator.} 
In practice we choose a bump function as the energy-domain filter function 
\begin{align}
    \hat{f}(\omega) = \mathbbm{1}_{[-S_\omega, 0]}  \exp\left\{ -\frac{a}{1-(\frac{2}{S_\omega} \omega + 1)^2} \right\} , \label{eqn:filter_def}
\end{align}
which has support only on $\omega \in [-S_\omega, 0]$, see \cref{fig:filter_fn_energy_domain}. In order to allow all possible energy-decreasing transitions we choose $S_\omega$ to cover the spectral width of $H$, i.e. we need $S_\omega \geq 2\norm{H}$ for a symmetric spectrum. The steepness parameter, $a$, sets the slope of the flanks of the bump function, which approaches the indicator function as $a \to 0$.

\subsection{Approximate energy filter}
\label{sec:approximate_energy_filter}

In practice, one approximates the integral over the entire real line in \cref{eqn:K_integral} by finite quadrature: 
\begin{align}
    K\simeq K_S:= & \sum_{l=-M_S}^{M_S} f(s_l) e^{iH s_l}A e^{-iH s_l} w_l ,
    \label{eqn:K_S_def}
\end{align}
with $s_l = l \tau_S$ and $w_l = \tau_S$ except for the boundary points, $w_{\pm M_S} = \tau_S/2$. One then has to decide how to determine the quadrature parameters $M_S$ and $\tau_S$ in order to ensure convergence to the ground state. Since the bump function has finite support, the Nyquist theorem tells us that the time step $\tau_S = \pi /S_\omega$ is optimal.

For this choice, we show in \cref{sec:quad_error} that for the bump function the quadrature error decreases exponentially with the number of integration points $M_S$:
\begin{align}
    \varepsilon_{M_S} \propto e^{-\sqrt{ a M_S}} . \label{eqn:eps_quad_scaling}
\end{align}
To ensure that the filter discriminates between the smallest energy increase and the smallest energy decrease, the flank of the filter function must be steep enough to resolve the spectral gap $\Delta$, as illustrated in \cref{fig:filter_fn_energy_domain}. To allow for a transition rate of at least $p_{\text{min}}$ for the lowest-energy transition, we require
\begin{align}
    a &= \ln(1/p_{\text{min}}) \left( 1 -(1 - 2\Delta/S_\omega )^2 \right) \label{eqn:optimal_a} \\
    &\stackrel{\Delta \ll S_\omega}{\simeq}  \Delta/S_\omega . \label{eqn:a_scaling}
\end{align}
Combining \cref{eqn:eps_quad_scaling} and \cref{eqn:a_scaling} recovers the well-known relationship between measurement time and spectral resolution: To resolve a spectral gap, $\Delta$, we need to sample a time interval of size $M_S \tau_S \propto 1/\Delta$.

To understand the effect of quadrature on the energy domain further, we insert $f(s) = \int_{-\infty}^{\infty} \frac{d\omega}{2\pi} \, e^{-i\omega s} \hat{f}(\omega)$ into \cref{eqn:K_S_def}:
\begin{align}
    K_S=& \sum_{l=-M_S}^{M_S} f(s_l) e^{iH s_l}A e^{-iH s_l} w_l\\
    =& \sum_{l=-M_S}^{M_S} \int_{-\infty}^{\infty} \frac{d\omega}{2\pi} \, e^{-i\omega s_l} \hat{f}(\omega)\sum_{i,j \in [N]} e^{iE_i s_l} \lvert \psi_i \rangle \langle \psi_i \rvert A \lvert \psi_j \rangle \langle \psi_j \rvert  e^{-iE_j s_l} w_l \\
    =& \sum_{i,j \in [N]} \lvert \psi_i \rangle \langle \psi_i \rvert A \lvert \psi_j \rangle \langle \psi_j \rvert  \int_{-\infty}^{\infty} \frac{d\omega}{2\pi} \,  \hat{f}(\omega) \underbrace{\sum_{l=-M_S}^{M_S} w_l \,  e^{-i(\omega-(E_i-E_j)) s_l}}_{= \tau_S D_{M_S}((\omega-(E_i-E_j))\tau_S)}\\
    =& \sum_{i,j \in [N]} \lvert \psi_i \rangle \langle \psi_i \rvert A \lvert \psi_j \rangle \langle \psi_j \rvert  \underbrace{\int_{-\infty}^{\infty} \frac{d\omega}{2\pi} \,  \hat{f}(\omega)\tau_S D_{M_S}((\omega-(E_i-E_j))\tau_S)}_{\hat{f}_\text{eff}(E_i-E_j)} \\
    =& \sum_{i,j \in [N]} \lvert \psi_i \rangle \langle \psi_i \rvert A \lvert \psi_j \rangle \langle \psi_j \rvert   \, \hat{f}_\text{eff}(E_i-E_j),\label{eqn:K_S_dirichlet}
\end{align}
where we identified the truncated sum with the shifted Dirichlet kernel
\begin{align}
    D_{M_S}(x)
:= \sum_{k=-M_S}^{M_S} e^{ikx} - \frac{1}{2} e^{iM_Sx} - \frac{1}{2} e^{-iM_Sx}
= \frac{\sin(M_S x)}{\tan(x/2)} , \label{eqn:dirichlet}
\end{align}
and defined the effective filter function for $\Delta E_{ij}=E_i-E_j$
\begin{align}
    \hat{f}_\text{eff}(\Delta E_{ij}) = \tau_S \int_{-\infty}^{\infty} \frac{d\omega}{2\pi} \,  \hat{f}(\omega)\,  D_{M_S}\left((\omega-\Delta E_{ij})\tau_S\right) . \label{eqn:def_f_eff}
\end{align}
In the limit of large $n$ the Dirichlet kernel converges to the Dirac comb and the effective filter function $\hat{f}_\text{eff}$ converges to $\hat{f}$. This is illustrated in \cref{fig:filter_fn_energy_domain}. In this limit we also recover the exact expression for $K$, \cref{eqn:K_energy}. 

This picture gives us a more general way to derive the relationship between the spectral gap and the quadrature parameters in \cref{eqn:a_scaling} from the energy domain picture, \cref{fig:filter_fn_energy_domain}: For an arbitrary filter function that approximates an indicator function, $\mathbb{1}_{[-S_\omega,0]}$, (the limit $a \ll 1$ in the case of a bump function) the transition probability for energy increasing transitions comes entirely from the Dirichlet kernel in \cref{eqn:K_S_dirichlet}. We demand that the effective filter function in \cref{eqn:def_f_eff} strongly suppresses transitions from the ground state to the first excited state, i.e. those with $\Delta = E_1-E_0>0$:
\begin{align}
    \hat{f}_\text{eff}(\Delta) \simeq & \tau_S \int_{-S_\omega}^{0} \frac{d\omega}{2\pi} \, \mathbb{1}_{[-S_\omega,0]}   D_{M_S}\left((\omega-\Delta)\tau_S\right) \\
    =& \int_{-(S_\omega+\Delta)\tau_S}^{-\Delta \tau_S} \frac{du}{2\pi} \, D_{M_S}(u) \\
    =& \int_{-(S_\omega+\Delta)\tau_S}^{-\Delta \tau_S} \frac{du}{2\pi} \, \frac{\sin(M_Su)}{\tan(u/2)} .
\end{align}
The $L_1$-norm of the Dirichlet filter is concentrated in its central peak, which is bounded by the zeros of the $\sin$-function at $M_Su=\pm \pi$. If the upper bound of the integral does not include this main flank then the transition is strongly suppressed, which is true for $M_S \Delta \tau_S = \pi$. Inserting the Nyquist condition $\tau_S= \pi/S_\omega$ recovers
\begin{align}
    M_S \geq \frac{S_\omega}{\Delta} , \label{eqn:dirichlet_scaling}
\end{align}
which again illustrates that in order to achieve a spectral resolution $\Delta / S_\omega$ we need an inversely proportional number of sampling points. The effect of the finite quadrature can therefore be understood entirely through convolution of the filter function with the Dirichlet kernel in the energy domain, a fact we will later use when choosing the quadrature parameters. 

\subsection{Error analysis}

To go from a given Lindbladian, \cref{eqn:lindblad_def}, to a circuit that can be implemented on a quantum computer requires a series of approximations. The exact definitions of the steps can be found in \cref{sec:errors}, but at a high level the approximations are:
\begin{enumerate}
\item Splitting the Lindbladian into its coherent and dissipative parts.\footnote{This can be circumvented by integrating the coherent term into the dilation as done in \cite{hahn2026efficient, ding2025end}, but this splitting error is subdominant and would therefore not change the overall analysis.}
\item Using the dilation technique described above to approximate the dissipative part.
\item Approximating the integral via quadrature. 
\item Trotterizing the time sum.
\item Bounding the Trotter error of the time evolution with $A$ and $H$.
\end{enumerate}

Because completely positive trace preserving (CPTP) maps are contractive under the trace norm, it follows that we can bound the error of the evolution for time $T=M \tau$ by only considering the error in a single step and multiplying it by $M$. The different error sources can then be isolated via the triangle inequality:
\begin{align}
\begin{split}
    \norm{\rho(\tau) - \bar{\rho}_{1,\text{trott}}}_1 \leq & \underbrace{\norm{e^{(\mathcal{L}_H + \mathcal{L}_K) \tau}\rho(0) - e^{\mathcal{L}_H \tau} e^{\mathcal{L}_K \tau} \rho(0)}_1}_{\text{Lindblad splitting error}} 
    + \underbrace{\norm{e^{\mathcal{L}_H \tau} e^{\mathcal{L}_K \tau} \rho(0) - \tilde{\rho}_1}_1}_{\text{Dilation error}} \\
    & + \underbrace{\norm{\tilde{\rho}_1-\tilde{\rho}_{1,s}}_1}_{\text{Integral quadrature error}} 
    + \underbrace{\norm{\tilde{\rho}_{1,s}-\bar{\rho}_{1}}_1}_{\text{Time sum Trotterization error}} 
    +\underbrace{\tnorm{\bar{\rho}_{1}-\bar{\rho}_{1,\text{trott}}}}_{\text{Trotter evolution error}}.\label{eqn:error_overview}
\end{split}
\end{align}

The authors of \cite{ding2024single} derive the asymptotic scaling of the first four errors assuming no Trotter error for $e^{-itH}$ and $e^{-itA}$. We extend the analysis here by deriving the full error bounds:

\begin{proposition}
\label{prop:total_error}
    Let $\bar{\rho}_{M,\text{trott}}$ be the state after $M$ steps of the fully compiled algorithm with step duration $\tau$, total time $T = M\tau$, quadrature spacing $\tau_S$ with $2M_S+1$ points, and Trotter step sizes $\tau_{T,1} = \tau/M_{T,1}$ and $\tau_{T,2} = \tau_S/M_{T,2}$ for the coherent and dissipative $H$ evolutions, respectively. Then, for any initial state $\rho(0)$,
    \begin{align}
    \begin{split}
        \tnorm{\rho(T) - \bar{\rho}_{M,\text{trott}}} \leq T \Big[ & \tau \left( C_S + C_D + C_{T,1} \right) + \tau^2 C_{T,2} + C_Q \\
        & + \Big( 2 \tau_{T,1}^2 + 4 (2M_S+1) \frac{\tau_S}{\tau} \tau_{T,2}^2 \Big) C_H + \tau_S^2 \, C_{\tilde{A}} \Big] ,
    \end{split} \label{eqn:total_error}
    \end{align}
    with the constants
    \begin{align}
        C_S &:= 4 \norm{H} \norm{A}^2 F_\text{tot}^2 , \\
        C_D &:= \frac{8}{3} \norm{A}^4 F_\text{tot}^4 , \\
        C_Q &:= 4 \norm{A}^2 F_\text{tot} F(M_S) , \\
        \begin{split}
    C_{T,1} &:=  \sum_{k=1}^{K}  \left[ \frac{1}{24} \norm{\ad^3_{V_{k}}(v_{k})} + \frac{1}{96} \norm{\ad^3_{v_{k}}( V_{k})} \right. \\
        & \quad \left. + \frac{1}{24}  \Big\| \ad^2_{V_k} v_k
        -  \frac{1}{2}\ad^2_{v_k} V_k \Big\|
        \Big( 11 \norm{V_{k+1}} + 3 \norm{V_k} + 3 \norm{v_k} + 8\sum_{k'=k+1}^K \norm{v_{k'}} \Big)\right] ,
    \end{split}\\
        C_{T,2} &:= \sum_{k=1}^{K}  \frac{1}{144} \left( \norm{\ad^2_{V_{k}} (v_{k})} +  \frac{1}{2} \norm{\ad^2_{v_{k}}(V_{k})} \right)^2  ,\\
        C_H &:= \frac{1}{12} \sum_{\gamma_1=1}^{\Gamma} \left\| \left[ \sum_{\gamma_3=\gamma_1+1}^{\Gamma} H_{\gamma_3}, \left[ \sum_{\gamma_2=\gamma_1+1}^{\Gamma} H_{\gamma_2}, H_{\gamma_1} \right] \right] \right\| + \frac{1}{24} \sum_{\gamma_1=1}^{\Gamma} \left\| \left[ H_{\gamma_1}, \left[ H_{\gamma_1}, \sum_{\gamma_2=\gamma_1+1}^{\Gamma} H_{\gamma_2} \right] \right] \right\| , \label{eqn:general_trotter_constant} \\
        C_{\tilde{A}} &:=  \norm{A}^2 \sum_{l=-M_S}^{M_S} \left| \operatorname{Im} f(s_l) \right| \left| \operatorname{Re} f(s_l) \right| .
    \end{align}
    Here $F(M_S) \propto \exp(-\sqrt{aM_S})$ is the filter tail mass bounded in \cref{eqn:filter_tail}, $F_\text{tot} = \sum_{l=-\infty}^{\infty} |f(s_l)|\tau_S$, $v_k$, $V_k$ and $K = 2M_S+1$ are defined in \cref{prop:quad_trott_error}, and $C_H$ is the second-order Trotter constant of the system Hamiltonian, $H=:\sum_{\gamma=1}^{\Gamma} H_\gamma$.
\end{proposition}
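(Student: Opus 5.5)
The plan is to reduce the $M$-step error to a single-step error, and then bound each of the five terms in \cref{eqn:error_overview} separately. Write the exact evolution as $\rho(T)=(e^{\Lind\tau})^M\rho(0)$ and the compiled evolution as $\bar\rho_{M,\text{trott}}=\mathcal{E}^M\rho(0)$, where $\mathcal{E}$ is the single-step compiled channel. Both maps are CPTP, so a telescoping sum
\[
e^{M\Lind\tau}-\mathcal{E}^M=\sum_{j=0}^{M-1}\mathcal{E}^{j}\,(e^{\Lind\tau}-\mathcal{E})\,e^{(M-1-j)\Lind\tau},
\]
together with trace-norm contractivity, gives $\tnorm{\rho(T)-\bar\rho_{M,\text{trott}}}\le M\sup_\rho\tnorm{(e^{\Lind\tau}-\mathcal{E})\rho}$. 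Since $M=T/\tau$, it remains to show that the single-step error is $\tau$ times the bracket in \cref{eqn:total_error}. The five contributions of \cref{eqn:error_overview} then fit together as follows: the splitting and dilation errors should be $O(\tau^2)$, the quadrature error $O(\tau)$, and the Trotter errors carry their own powers of $\tau$.

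\textbf{Splitting and dilation.} For the splitting term I would use the standard first-order Lie--Trotter bound for superoperators,
\[
\tnorm{e^{(\Lind_H+\Lind_K)\tau}-e^{\Lind_H\tau}e^{\Lind_K\tau}}_{1\to1}\le\tfrac{\tau^2}{2}\norm{[\Lind_H,\Lind_K]}_{1\to1}.
\]
Bounding $\norm{\Lind_H}_{1\to1}\le2\norm{H}$ and $\norm{\Lind_K}_{1\to1}\le2\norm{K}^2$, with $\norm{K}\le\norm{A}F_\text{tot}$, gives $\tau^2 C_S$. For the dilation term, I would expand $e^{-i\tilde K\sqrt\tau}$ and trace out the ancilla. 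This reproduces $1+\tau\Lind_K$ exactly to first order, leaving a remainder controlled by the fourth-order terms in $\sqrt\tau\,\norm{\tilde K}$. Comparing this with the Taylor remainder of $e^{\tau\Lind_K}$ (following \cite[Lemma 3]{ding2024single}, but now tracking constants) should give $\tfrac{8}{3}\tau^2\norm{K}^4$, i.e.\ $\tau^2 C_D$.

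\textbf{Quadrature.} Replacing $K$ by $K_S$ changes the dilation channel by at most about $2\tau\norm{\tilde K-\tilde K_S}(\norm{K}+\norm{K_S})$. The difference $K-K_S$ has two parts: the truncated tail, bounded by $\norm{A}F(M_S)$ via \cref{eqn:filter_tail}, and the discretization error, which vanishes at the Nyquist spacing $\tau_S=\pi/S_\omega$ because $\hat f$ is band-limited. Both $\norm{K}$ and $\norm{K_S}$ are at most $\norm{A}F_\text{tot}$, which yields $\tau C_Q$.

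\textbf{Time-sum Trotterization.} I would invoke \cref{prop:quad_trott_error}, which writes $e^{-i\sqrt\tau\tilde K_S}$ as a product over $k=1,\dots,K$ of exponentials of the $v_k$. Its second-order commutator bound, with the nested-commutator bookkeeping through the partial sums $V_k$, should give $\tau^2 C_{T,1}$ from the $\ad^3$ and mixed terms and $\tau^3 C_{T,2}$ from the squared $\ad^2$ term. I expect this step to be the main obstacle. The dilation acts with $\sqrt\tau$, so the channel-level error has to be organized carefully: unitary errors of order $\tau^{3/2}$ must cancel at channel level, or be paired so that only $\tau^2$ survives. I would first check whether the constants of \cref{prop:quad_trott_error} are already stated at the channel level. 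I would also need to absorb the error from implementing $\tilde A$ with complex $f(s_l)$, which splits into real and imaginary parts, into $\tau_S^2 C_{\tilde A}$.

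\textbf{Hamiltonian Trotter.} Finally, I would substitute second-order product formulas for $e^{-iH\tau}$ and for each $e^{\pm iHs_l}$, using the standard commutator bound $\tau_{T}^3 C_H$ per Trotter step. The coherent part uses $M_{T,1}$ steps, each entering the channel twice, which gives $2\tau\tau_{T,1}^2C_H$. For the dissipative part, each of the $2M_S+1$ conjugations uses $M_{T,2}$ steps and appears twice in $K_S$ and twice more in the channel. Since the conjugations act on $A$ inside the $\sqrt\tau$ dilation, I would bound their contribution by $4(2M_S+1)\tau_S\tau_{T,2}^2C_H$. Summing all five pieces and multiplying by $M=T/\tau$ gives \cref{eqn:total_error}.
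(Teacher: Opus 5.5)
Your proposal is correct and follows the paper's proof essentially step for step. Trace-norm contractivity reduces the $M$-step error to $M=T/\tau$ times the single-step error, and the single-step error is split into the splitting, dilation, quadrature, time-sum Trotter, $H$-Trotter and $\tilde{A}$-Trotter contributions, each bounded as you describe. The $\tau^{3/2}$ concern is already handled in \cref{prop:quad_trott_error}: it is stated at the channel level, gaining the extra order by inserting block off-diagonal terms $D_k$ that vanish under the partial trace, so you can invoke it directly at $t=\sqrt{\tau}$.
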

The proof of \cref{prop:total_error}, with details of the notation used, can be found in \cref{sec:errors}. 
We further show in \cref{sec:fixed_point_error} that the splitting and dilation errors do not necessarily contribute to the error between the fixed points of the exact and approximate evolution, so for the task of state preparation these can be ignored when choosing the parameters.

From this, we can recover the known scalings of the algorithm parameters that are needed for target trace-norm precision $\varepsilon_\text{Tr}$: In the quadrature error analysis we set the integral step size to the Nyquist frequency, 
\begin{align}
    \tau_S = \frac{\pi}{S_\omega} ,
\end{align}
which leads the number of sample points to scale as
\begin{align}
    M_S \propto \frac{\norm{H}}{\Delta} \text{polylog}(T_\text{mix}/\varepsilon_\text{Tr}) .
\end{align}
Note that this depends only polylogarithmically on the mixing time, so $M_S$ can approximately be chosen irrespectively of the mixing time, provided that the other errors are controlled by the overall time step $\tau$, which scales as 
\begin{align}
    \tau = \tilde{\mathcal{O}} \left( \frac{\varepsilon_\text{Tr}}{T_\text{mix}(\norm{K}^4 +\norm{K}^2 \norm{H})} \right) .
\end{align}
The system in question only enters here in the second term under the denominator, which comes from the Lindblad splitting error. This is important when we want to guarantee that the simulation stays close to the ideal evolution of the Lindbladian. But, as we discuss in \cref{sec:fixed_point_error}, this is not necessary for the ground state to be a fixed point of the dynamics. For systems that we know to converge to the ground state from simulations we omit this dependence, leading to system independent scaling up to polylogarithmic factors
\begin{align}
    \tau = \tilde{\mathcal{O}} \left( \frac{\varepsilon_\text{Tr}}{T_\text{mix} \norm{K}^4 } \right) . \label{eqn:fixed_point_tau_scaling}
\end{align}

\section{Simulation}
\label{sec:simulation}

We implement two types of simulations for the Lindblad algorithm: a Cirq-based \cite{Cirq_Developers_2026} density matrix simulation and a Qiskit-based \cite{javadi2024quantum} state vector simulation. The former is deterministic, but requires memory to store the entire density matrix, limiting it to very small system sizes. The state-vector simulator, on the other hand, needs to sample multiple shots to estimate the average trajectory, but can simulate somewhat larger systems, although in practice runtime becomes a concern before memory limits are reached. There are more sophisticated techniques based on matrix product states as proposed in \cite{zhan2026rapid}, but these are not necessary for the systems in consideration here. 

\subsection{Hubbard model}

We simulate the fermionic Hubbard model,
\begin{align}
H
= \underbrace{-t_\text{hop} \sum_{i,\sigma} \left( \hat{c}^\dagger_{i,\sigma}\,\hat{c}_{i+1,\sigma}
+ \hat{c}^\dagger_{i+1,\sigma}\,\hat{c}_{i,\sigma} \right)}_{H_h}
+ \underbrace{U \sum_i \hat{n}_{i\uparrow}\,\hat{n}_{i\downarrow}, }_{H_I}\label{eqn:hubabrd_def}
\end{align}
where $t_\text{hop}$ is the hopping parameter and $U$ is the interaction strength. In the following we focus on the intermediate-coupling regime with $t_\text{hop}=1$ and $U=4$, and consider a one-dimensional lattice with an even number of sites and open boundary conditions\footnote{For the implementation and the parameter choices the geometry of the system makes little difference, but we consider 1D systems here because they allow multiple data points before we reach the limit of our simulations at around 8 sites. For actual quantum advantage one has to consider more complex geometries, but the $S_1$ tiling that we consider below can be adapted for almost any lattice with minimal modification.}. The number of qubits required for the Lindblad simulation of such a system (applying a Jordan-Wigner encoding) is one qubit for each spin-orbital plus one qubit for the ancilla, leading to a total of $2 n_{\text{sites}} +1$ qubits. In the following all times and energies are understood to be in units of $1/t_\text{hop}$ and $t_\text{hop}$, respectively. 

As the initial state we choose the half-filling N\'eel state with $S_z = 0$, $\ket{\uparrow \downarrow \uparrow \downarrow...\downarrow}$, which can be prepared trivially. It is also known that, for systems with an even number of sites, the global ground state is included in this sector. To verify that our results hold more generally, we initialized the simulation with all the excited eigenstates of the 2-site system in this spin-particle-number sector, which have zero overlap with the ground state, and found a similar convergence behavior for all of them. 

Crucial for this is the choice of mixing operator, $A$: In practical applications the global ground state across all symmetry sectors might not be relevant, but instead the lowest-lying energy state within a certain spin and particle-number sector is the target quantity. For example, one is often interested in the Hubbard model at a particular doping away from half-filling. 
We therefore need to use an operator that commutes both with the total particle number operator and the total spin operator, and for easy implementation we require it to be local. The most general single-edge hopping operator with mixer strength $m\in \mathbb{C}$ is
\begin{align}
    A = m \, c^\dagger_{i,\sigma} c_{j,\sigma} + m^* \,c^\dagger_{j,\sigma} c_{i,\sigma} , \label{eqn:A_def}
\end{align}
although we find that only the imaginary part of $m$ contributes towards faster convergence, so we set Re$(m)=0$ in the following. We only apply the hopping mixer on one spin at a time, as it otherwise loses ergodicity, see \cref{fig:mixer_comparison}.

\begin{figure}
    \centering
    \includegraphics[width=0.7\linewidth]{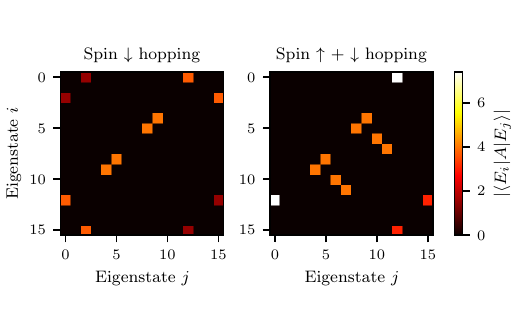}
    \caption{Comparison of jump operators, $A$, in the eigenbasis of the 2-site Hubbard Hamiltonian with $U=4$ and open boundary conditions. The left operator acts only on the spin-down section, whereas the right operator acts on a linear combination of both. The axis labels indicate the eigenstates in order from lowest to highest energy, such that the top-right triangular matrix contains the desired energy transitions. Note that the two-spin mixer on the right is no longer ergodic in the $S_z=0$, 2-particle sector since it is missing the top-left transition.}
    \label{fig:mixer_comparison}
\end{figure}

\subsection{Algorithmic error resilience}

Unlike unitary evolution, Lindblad evolution is contractive, meaning that it has a certain resilience to noise: If the noise is weak enough it can be ``offset'' by the Lindblad dynamics, which then converges to a new fixed point that may still be close to the target state. Such behavior has, for example, been observed in spin systems \cite{hahn2026efficient}.

In practice, this approach suffers whenever we are only interested in a subspace of the total Hilbert space of the system. This requires that the jump operators, and thus the Lindbladian, are only ergodic over this subspace. However, this means that noise that does not respect the computational subspace causes the system state to leak out, and the Lindblad dynamics do not allow us to return to the original space. This is illustrated in \cref{fig:noise}, which shows that even low levels of circuit-level noise cause the fixed point of the evolution to drift over time.

\begin{figure}
    \centering
    \includegraphics{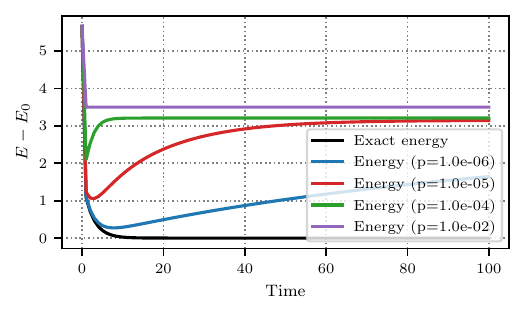}
    \caption{Energy error relative to the ground-state energy as a function of time, for the 2-site Hubbard model with $U=4$, with different levels of amplitude damping noise applied. The system is initialized in a $S_z=0$, two-particle state, but drifts into other particle sectors due to the noise, resulting in a different fixed point.}
    \label{fig:noise}
\end{figure}

Unfortunately, the case where we want to restrict the evolution to a particular particle-number sector is typical in fermionic systems such as the Hubbard model or quantum chemistry\footnote{This is obvious in second quantization, where the Hilbert space includes the multi-particle Fock space, but also affects first quantization \cite{georges2025quantum}, where most algorithms rely on the initial state preparation to return an anti-symmetrized wave function, which would no longer be guaranteed.}. The noisy evolution will leak out of such a subspace. If the subspace of interest contains the global energy minimum and the subsequent algorithm is resilient to a finite overlap then we can mitigate this issue by introducing globally ergodic mixers, although these might lead to longer mixing times. However, to answer questions about superconductivity, as one example, one may wish to study systems doped away from half-filling, such that it is not sufficient to find the global ground state across all symmetry sectors.

\subsection{Precision-speed trade-off}

\begin{figure}
    \centering
    \includegraphics{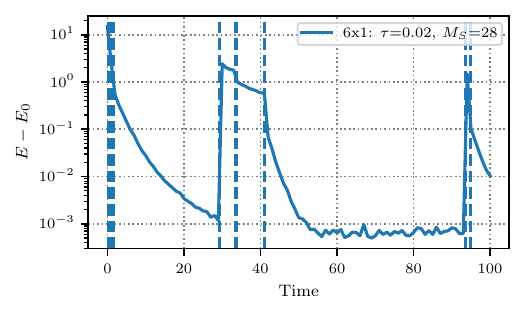}
    \caption{Example energy trajectory in a state vector simulation for a 6-site one-dimensional Hubbard model with $U=4$ and open boundary conditions. Vertical dashed lines indicate jumps, i.e. when the ancilla state was measured in the $\ket{1}$-state. }
    \label{fig:trajectory_example}
\end{figure}

A typical state vector trajectory is shown in \cref{fig:trajectory_example} with jumps, i.e. cases where we measure $\ket{1}$ on the ancilla, indicated by dashed vertical lines. The output state of the quantum simulation is not deterministic, because the results depend on the outcomes of the mid-circuit measurements. Jumps are common far from the ground state and lead to fast energy dissipation, but become rarer as the system approaches the ground state. This is because the jump probability is proportional to $\norm{K \ket{\psi}}^2$, which vanishes for the ground state, $\ket{\psi_0}$. 
However, for the approximate filter integral, \cref{eqn:K_S_def}, there is a finite probability of jumps that increase the energy. In fact, for the ground state these are the only jumps that can happen. In \cref{fig:trajectory_example} these show up as spikes in the energy. This can lead to temporary high-energy states after a trace-out step even after the density matrix has already reached the fixed point, because the trace norm bounds only restrict the expectation values over the trace operation.

\begin{figure}
    \centering
\includegraphics{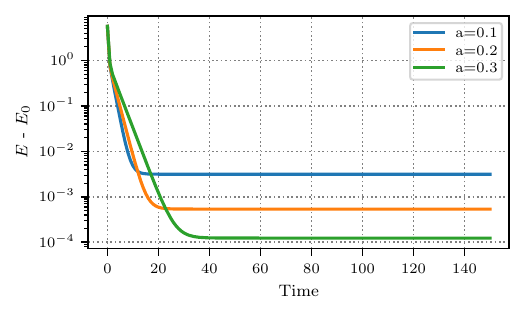}
    \caption{Energy convergence in the density matrix simulation of the 2-site Hubbard model with $U=4$ for different values of the steepness parameter $a$. Smaller values of $a$ speed up the evolution as they enable more dissipation, but they also increase leakage into higher-energy states.}
    \label{fig:speed_time_tradeoff}
\end{figure}

We find a trade-off between convergence speed and final precision, as shown in \cref{fig:speed_time_tradeoff}: By increasing the steepness parameter $a$ we can suppress energy-increasing transitions exponentially, but at the cost of reducing the transition rate for small energy transitions. This can be seen in \cref{fig:filter_fn_energy_domain}, where the steepness parameter determines the amplitude of the flank of $\hat{f}$ at the spectral gap. In practice, we find that choosing $a \ll 1$, i.e. the regime where $\hat{f}$ approaches a rectangular filter, ensures fast convergence, if the approximation error is fully captured in the Dirichlet kernel, which can be controlled by choosing $M_S$ as discussed in \cref{sec:approximate_energy_filter}.

\subsection{Optimal parameter regimes}

For a given error budget we have to choose the number of approximation points, $M_S$, to approximate the filter integral, \cref{eqn:K_S_def}. Higher values of $M_S$ allow lower fixed point error, but require longer Hamiltonian simulation times, since we need to time evolve $e^{-iH t_\text{step}}$, with $t_\text{step}=2(2M_S+1)\tau_S$, in every time step. 

In the {\it rigorous regime}, we use that the quadrature error is suppressed as $\varepsilon_{M_S} \propto \exp(-\sqrt{a M_S})$, such that choosing $a$ according to \cref{eqn:a_scaling} and selecting $M_S$ accordingly we can exponentially suppress the quadrature error while ensuring that we can 
discriminate between all relevant energy transitions.

In the {\it empirical scaling regime} we choose $a \ll 1$, since numerically we see the fastest convergence there. In this limit the filter approximates an indicator function and we choose $M_S$ according to the heuristic \cref{eqn:dirichlet_scaling}, which has the same asymptotic scaling with the system parameters, $M_S \propto \norm{H}/\Delta$, but we determine the prefactor empirically from the simulation results, which is much lower than in the rigorous regime.

However, this still requires $M_S\geq 10$, even for the two-site model. We can do better if we know the exact gap of the system: Because the effective filter function \cref{eqn:def_f_eff} is easy to calculate classically at any system size, we can search for an $M_S$ that ensures that the first node of the effective filter function is placed exactly on the spectral gap, so that the transition out of the ground state to the first excited state is strongly suppressed. This {\it empirical scaling with known gap} has the same asymptotic behavior as the general empirical and rigorous scalings, but with an even lower prefactor. This allows fixed-point energy errors as low as $10^{-2}$ with as few as 9 sample points, $M_S=4$, see \cref{fig:optimal_M_S_energy_domain} for an illustration. This does however require knowing the gap of the system in advance, which is usually not the case and would in fact make the calculation redundant if the goal of the algorithm is to estimate low-lying energies. We include these results nevertheless, because preparation of states with known energies could be a useful subroutine for other algorithms.

\begin{figure}[h]
    \centering
    \includegraphics{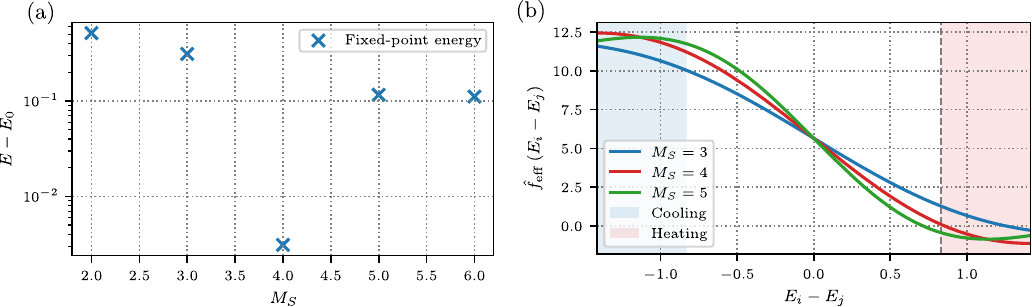}
    \caption{We plot the dependence of the fixed-point energy on the number of sampling points, $M_S$, for the 2-site Hubbard model with $\tau=0.125$. (a) $M_S=4$ gives the optimal energy error, even compared to better approximations with larger $M_S$. (b) In the energy domain we see that the node of $\hat{f}_\text{eff}$ for $M_S=4$ lies exactly at the spectral gap, as indicated by the dashed line. This causes the transition rate from the ground state to the first excited state to be nearly zero.}
    \label{fig:optimal_M_S_energy_domain}
\end{figure}

For the Hubbard model one often considers an extensive error budget \cite{Kivlichan2020improvedfault, campbell2022early}, $\varepsilon = 0.005\cdot n_\text{sites}$, where $n_\text{sites}$ is the number of lattice sites. In particular, this extensive error budget is appropriate when we aim to estimate the ground-state energy \emph{per site} to a fixed accuracy, as is common. We therefore do not target the exact ground state, but instead the low-energy subspace with $E \leq E_0 + \varepsilon$, and allow for a trace-norm error of $\varepsilon_\text{Tr}=0.005 \cdot n_\text{sites} / \norm{H}$.

The empirical scaling and the ``empirical scaling with known gap'' for the extensive energy error budget is determined in the following way: First, we find minimal parameters that reach this energy precision by scanning the parameter landscape for a 2-site system. Once found, we extrapolate to larger systems using the known asymptotic scaling laws, but using the prefactors found in the simulation. We then verify that the scaled parameters satisfy the desired precision numerically by performing state vector simulations for up to 8 sites, where the exact energy and spectral gap are known. 
For larger systems we extrapolate the spectral norm and gap from the smaller systems, which then enter into the empirical scaling laws as parameters, see \cref{sec:spectral_scaling}. For the rigorous scaling estimate we use the known bound for the spectral norm \cite{campbell2022early}
\begin{align}
    \norm{H} \leq \norm{H}_\text{max} := t_{\text{hop}} \tnorm{R} + U n_\text{sites} , \label{eqn:H_spectral_bound}
\end{align}
where $R$ is the adjacency matrix of the lattice graph. 
For the empirical scalings we can diagonalize the Hamiltonian for systems with up to 8 sites and extrapolate from there with power law fits as described in \cref{sec:spectral_scaling}. With these estimates for the spectral norm and gap we then use the verified parameter scaling laws to estimate the required parameters for classically-non-trivial systems.

For large systems we reach a crossover point, where the target extensive energy error becomes larger than the spectral gap, $\varepsilon = 0.005\cdot n_\text{sites} > \Delta$. In this case there is no need to resolve these fine energy structures, and we use the effective spectral gap
\begin{align}
    \Delta_\text{eff}:=\max(0.005\cdot n_\text{sites},  \, \Delta) ,
\end{align}
instead. We compare the resulting parameter regimes in \cref{tab:scaling_overview}.

\begin{table}[tb]
\centering
\renewcommand{\arraystretch}{1.6}
\setlength{\tabcolsep}{8pt}
\begin{tabular}{@{}lccc@{}}
\toprule
 & Rigorous bounds & Empirical (general) & Empirical (known gap) \\
\midrule
$S_\omega$
  & $2\norm{H}_\text{max}$
  & $2.2\norm{H}$
  & $2\norm{H}$ \\
$a$
  & $\ln(1/0.05)\bigl[1-(1-2\Delta_\text{eff}/S_\omega)^2\bigr]$
  & $10^{-4}/n_\text{sites}$
  & $10^{-6}$ \\
$M_S$
  & $\propto a^{-1}\log^2(1/\varepsilon_{M_S})$ \ (\ref{eqn:quadrature_error})
  & $1.23\,S_\omega/\Delta_\text{eff} - 0.5$
  & $0.55\,(S_\omega/\Delta_\text{eff})^{1.03}$ \\
$M = T/\tau$
  & $\propto T_\text{mix}^2\, n_\text{sites}/\varepsilon_\tau$ \ (\ref{eqn:total_error})
  & $\propto T_\text{mix}^2/\varepsilon_\tau$ \ (\ref{eqn:fixed_point_tau_scaling})
  & $\propto T_\text{mix}^2/\varepsilon_\tau$ \ (\ref{eqn:fixed_point_tau_scaling}) \\
$M_{T,1},\, M_{T,2}$
  & $\propto T_\text{mix}^{3/2}\, n_\text{sites}\,\varepsilon_{H,\text{trott}}^{-1/2}$ \ (\ref{eqn:H_trott_err_scaling})
  & $2$
  & $2$ \\
$m$
  & $4i$
  & $6i$
  & $8i$ \\
\bottomrule
\end{tabular}
\caption{Parameter choices in the three regimes: rigorous bounds, empirical fits, and empirical fits with known gap. The parameters are the spectral width $S_\omega$, the steepness parameter $a$, the number of quadrature points $M_S$, the number of Lindblad steps $M$, the numbers of Trotter steps $M_{T,1}$ and $M_{T,2}$ for $e^{-i\tau H}$ and $e^{-i\tau_S H}$, respectively, and the mixer strength $m$. Equation numbers indicate where each scaling is derived.}
\label{tab:scaling_overview}
\end{table}

We validate these scalings for systems with 2, 4, 6 and 8 sites. The energy trajectories of the state vector simulations, averaged over the multiple independent runs, are shown in \cref{fig:parameter_scaling_simulations}.

\begin{figure}[h]
    \centering
    \begin{minipage}{0.49\textwidth}
        \centering 
        \includegraphics[width=\linewidth]{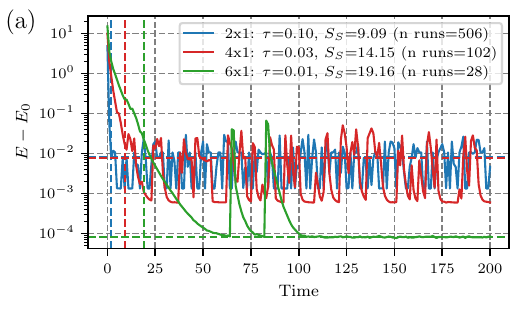}
    \end{minipage}\hfill
    \begin{minipage}{0.49\textwidth}
        \centering 
        \includegraphics[width=\linewidth]{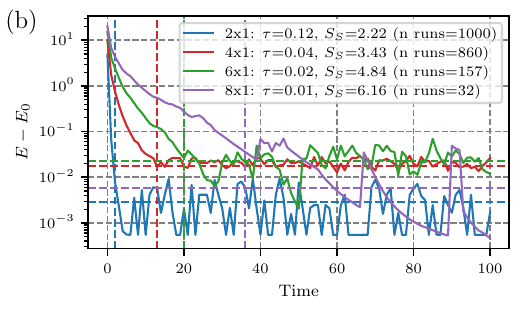}
    \end{minipage}
    \caption{Averaged energy trajectories for one-dimensional Hubbard models of varying sizes. In the legends, $\tau$ is the Lindblad step size, and $S_S$ is the Hamiltonian evolution time required for the filter integral. The number of repetitions over which the average is taken is $n {\text{ runs}}$, shown in the legend for each result. (a) shows simulations with parameters from the empirical scaling, while (b) shows the simulations using the empirical scaling with known gap. Horizontal dashed lines indicate the average energies in the long-time limit, and vertical dashed lines indicate the times at which the average falls within the target energy precision, $\varepsilon = 0.005\cdot n_\text{sites}$.}
    \label{fig:parameter_scaling_simulations}
\end{figure}

\section{Resource estimation}
\label{sec:costing}

We now turn to estimating the fault-tolerant resources required for simulating Hubbard models, specifically to prepare a state in the low-energy subspace. We primarily focus on resource estimates for one-dimensional lattices, but also argue that the cost \emph{per unit simulation time} will not be significantly different for two-dimensional lattices.

The two main primitives required for the Lindblad evolution algorithm are the time evolution with the system Hamiltonian, $e^{-iHt}$, and the joint time evolution of the system and bath with the jump operator $e^{-i \sigma \otimes A t}$. To stay within the low-ancilla paradigm we simulate the time evolution via Trotterization, which has asymptotically worse performance than QSVT-based methods \cite{martyn2021grand} but tends to perform much better in practice than predicted by worst-case bounds.
In order to obtain concrete parameters in the rigorous regime, we assign our total error budget of $\varepsilon_\text{Tr} = 0.005 \cdot n_\text{sites}/\norm{H}$ over the different error sources as follows:
\begin{enumerate}
    \item $\varepsilon_{\tau}=70\%$: The budget allocated to the Lindblad time step $\tau$. It controls the error from Lindblad splitting, dilation and quadrature Trotterization, see \cref{prop:total_error}. The overall cost scales linearly with $1/\tau$, justifying a large allocation to this error source.
    \item $\varepsilon_{M_S}=15\%$: The error budget allocated to the finite truncation and quadrature of the filter integral with $M_S$ steps. This is a major source of cost, but larger budgets only reduce $M_S$ polylogarithmically, so we allocate only a moderate amount of budget to it.
    \item $\varepsilon_{H,\text{trott}}=10\%$: The error due to Trotter splitting of the Hamiltonian, see \cref{sec:H_trott_err}. This is controlled by further splitting the evolutions $e^{-i\tau H}$ and $e^{-i\tau_S H}$ into $M_{T,1}$ and $M_{T,2}$ steps, respectively. Since $\tau$ and $\tau_S$ are already small to control other errors, the additional cost is moderate.
    \item $\varepsilon_{A,\text{trott}}=1\%$: The error due to Trotter splitting of $e^{-\sqrt{\tau} \tau_S (X+Y)\otimes A }$, see \cref{sec:A_trott_err}. $A$ does not scale with the system size, while both $\tau$ and $\tau_S$ do, so additional Trotter steps are never needed.
    \item $\varepsilon_{H,\text{rot}}=4\%$: The error due to finite precision of the Clifford+T approximation to rotation synthesis. This error is exponentially suppressed in the number of T gates used for the synthesis and therefore easy to control.
\end{enumerate}

\subsection{Trotter compilation}
\label{sec:trotter_comp}

We first map the fermionic Hamiltonian \cref{eqn:hubabrd_def} to Pauli operators via the Jordan-Wigner mapping \cite{jordan1928paulische}. We use a second-order Trotter product formula, incurring an error that we denote $\varepsilon_{H,\text{trott}}$. We consider a chemically-shifted version of the Hamiltonian \cite{campbell2022early, bay2025fault}, which ensures a simple interaction term of $(U/4) \sum_i Z_{i,\uparrow} Z_{i,\downarrow}$ after performing a Jordan-Wigner transformation, while not changing energy differences of the spectrum within a particle number sector. Because all terms in this sum commute, we can implement the interaction term directly by applying layers of two-qubit $ZZ$-rotations to the system, each of which can be implemented as a single-qubit $Z$ rotation up to Cliffords. Slightly more work is required for the hopping Hamiltonian. Under the Jordan-Wigner mapping the hopping term between sites $i$ and $j$ with spin $\sigma$ becomes
\begin{align}
    c^\dagger_{i,\sigma} c_{j,\sigma} +  c^\dagger_{j,\sigma} c_{i,\sigma} \rightarrow &  \frac{1}{2} (X_{i,\sigma} \otimes \overrightarrow{Z} \otimes X_{j,\sigma} + Y_{i,\sigma} \otimes \overrightarrow{Z} \otimes Y_{j,\sigma} ) ,
\end{align}
where $X_i \otimes \overrightarrow{Z} \otimes X_j = X_i \otimes Z_{i+1}\otimes...\otimes Z_{j-1}\otimes X_j$ and we suppress the spin index. The length of the $\overrightarrow{Z}$ string depends on the fermionic ordering, which can be chosen arbitrarily, with consecutive indices needing no $Z$ terms. For a one-dimensional lattice with open boundary conditions, we take the second-order product formula (for $r$ steps):
\begin{equation}
    e^{-iHtr} \simeq e^{-i H_I t /2} \Big( e^{-i H_h^1 t /2} e^{-i H_h^2 t} e^{-i H_h^1 t /2} e^{-i H_I t} \Big)^r e^{i H_I t/2},
    \label{eqn:H_trotter_step_1d}
\end{equation}
where $H_h^1$ and $H_h^2$ are two sections of the hopping Hamiltonian, chosen in the obvious way such that no fermionic swap gates are needed to avoid implementing $\overrightarrow{Z}$ strings in either section.

While we only perform simulations and report resource estimates for one-dimensional lattices, we will argue below that the simulation cost per unit time is not significantly different for two-dimensional lattices, and our implementation in Qualtran \cite{harrigan2024expressing} focuses on such two-dimensional lattices, and so we briefly describe this. Interaction terms can be implemented exactly as described above. For the hopping term, we follow \cite[Fig. 1(a)]{bay2025fault} and partition the $L_1 \times L_2$ square lattice (with open boundaries) into $2 L_1 L_2 - L_1 - L_2$ $S_1$ tiles, each covering one edge. All hopping terms can be implemented in four Trotter partitions, incurring only a small additional Trotter error compared to the 1D case, and can be implemented locally using only two fermionic orderings, which one can switch between using layers of fermionic swap gates \cite{Babbush2018}. The time evolution for $r$ steps is 
\begin{align}
    e^{-iHtr} \simeq e^{-iH_It/2} \left( e^{-iH_{-,1}t/2} e^{-iH_{-,2}t/2} S e^{-iH_{|,1}t/2} e^{-iH_{|,2}t} e^{-iH_{|,1}t/2} S^\dagger e^{-iH_{-,2}t/2} e^{-iH_{-,1}t/2} e^{-i H_It} 
\right)^r e^{i H_It/2} , \label{eqn:H_trotter_step}
\end{align}
where $S$ denotes the fermionic swap layers. For a square lattice there are slightly more efficient tilings than the $S_1$ approach \cite{campbell2022early, bay2025fault}, but the $S_1$ tiling is simple and easily applied to any lattice.

For both one-dimensional and two-dimensional lattices, the entangling layer is taken as a tensor product of a Pauli operator $\sigma_{X/Y}$ with the jump operator, which we chose to be a randomly-selected hopping term with a purely imaginary coefficient:
\begin{align}
    (a X_\text{anc} + b Y_\text{anc}) \otimes i \im( m)(c^\dagger_{i,\sigma} c_{j,\sigma}-c^\dagger_{j,\sigma} c_{i,\sigma}) \to  (a X_\text{anc} + b Y_\text{anc}) \otimes \frac{\im(m)}{2} (Y_{i,\sigma} \otimes \overrightarrow{Z} \otimes X_{j,\sigma}-X_{i,\sigma} \otimes \overrightarrow{Z} \otimes Y_{j,\sigma}).
\end{align}
By ensuring that the ordering at the end of the last Hamiltonian evolution step has $i$ and $j$ as consecutive labels the $\overrightarrow{Z}$ terms vanish again, and the evolution simplifies to implementing two weight-3 Pauli rotations. We use first-order Trotter splitting to separate the $X_\text{anc}$ and $Y_\text{anc}$ terms, incurring a (comparatively small) Trotter error, $\varepsilon_{A,\text{trott}}$. Together with the Trotter compilation of Hamiltonian evolution described above, this allows us to cost the entire algorithm in the Qualtran framework \cite{harrigan2024expressing}. 

In the standard Pauli-based computation model \cite{bravyi2016trading,litinski2019game}, the key resource is the number of non-Clifford gates required to perform the simulation. These appear in the Trotterized algorithm primarily through Pauli rotations, which can be decomposed into Clifford+T sequences using rotation synthesis. Using the mixed-fallback synthesis method \cite{kliuchnikov2023shorter}, a diamond distance error of $\varepsilon$ per rotation can be achieved with
\begin{align}
    N_{T} \simeq 0.53 \log_2(1/\varepsilon) +4.86
\end{align}
T gates on average per single-qubit rotation, while only requiring a single additional ancilla for the fallback protocol. Two such rotations are needed to implement a single $S_1$ tile, while one such rotation is needed to implement a $ZZ$-Pauli rotation for each interaction term. Distributing the error budget above evenly over all rotations in the circuit, we find that we can always control the error with fewer than 37 T gates per rotation (the rigorous costing with 75 sites). For the empirical scaling we never need more than 26 T gates per rotation and even fewer for the smaller systems. We also considered taking into account the reduction in cost from sampling the identity in the mixed-fallback routine \cite{bothe2026more}, but this yielded only a small improvement of a few percent of the total cost for the examples studied here.

\subsection{Results}

We are now in a position to determine the number of T gates required to converge the simulated systems in \cref{fig:parameter_scaling_simulations}, for which we can read off the time required to converge below the energy error threshold. The results are shown in \cref{tab:small_system_costing}, demonstrating that even small systems can already require billions of T gates to converge to good precision.

\begin{table}[h]
    \renewcommand{\arraystretch}{1.2}
    \setlength{\tabcolsep}{10pt} 
    \centering
    \begin{tabular}{llrrrrr}
\toprule
Scaling & System & $T_\mathrm{mix}$ & $M_S$ & $\tau$ & Steps & T count \\
\midrule
\multirow{3}{*}{Empirical (general)} & $2\times1$ & 2 & 18 & 0.1 & 20 & $7.82 \times 10^{5}$ \\
 & $4\times1$ & 9 & 59 & 0.0345 & 261 & $9.62 \times 10^{7}$ \\
 & $6\times1$ & 19 & 122 & 0.0135 & 1406 & $1.92 \times 10^{9}$ \\
\midrule
\multirow{4}{*}{Empirical (known gap)} & $2\times1$ & 2 & 4 & 0.125 & 16 & $1.46 \times 10^{5}$ \\
 & $4\times1$ & 13 & 13 & 0.0435 & 299 & $2.42 \times 10^{7}$ \\
 & $6\times1$ & 20 & 28 & 0.0161 & 1240 & $3.66 \times 10^{8}$ \\
 & $8\times1$ & 36 & 48 & 0.00685 & 5256 & $3.92 \times 10^{9}$ \\
\bottomrule
\end{tabular}
    \caption{Parameters, mixing times and required number of T gates for the simulated systems in \cref{fig:parameter_scaling_simulations}.}
    \label{tab:small_system_costing}
\end{table}

Beyond the simulated systems we cannot readily predict the mixing time required for convergence, and this will strongly depend on the choice of mixers, system parameters and geometry. We instead calculate the cost to perform a single unit of time evolution. The actual number of steps required for convergence scales quadratically with the mixing time, $M \propto T_\text{mix}^2$, because not only do we have to evolve for longer times, but the error budget is also spread over more steps, necessitating smaller time steps, $\tau$. 

In the case of the empirical scaling we assume that the required $\tau$ does not depend on the system size (see the discussion in \cref{sec:fixed_point_error}), so for a single time unit $\tau$ can be taken as constant. From the numerical results we estimate somewhat conservatively that $\tau = 0.15$ for the general empirical scaling and $\tau=0.25$ for the empirical scaling with a known gap, see \cref{sec:spectral_scaling}.

The resulting T-gate counts for different system sizes are shown in \cref{fig:T_scaling}. Compared to the rigorous scalings we see that the empirical parameter scalings require orders of magnitude lower T-gate counts, although they remain significant with $10^8-10^9$ T gates in the range of system sizes presented. However, the scaling with the size of the system reduces to approximately linear in the number of sites for larger systems, where the extensive error budget loosens the requirements on the filter function. 

\begin{figure}
    \centering
    \includegraphics[scale=1.0]{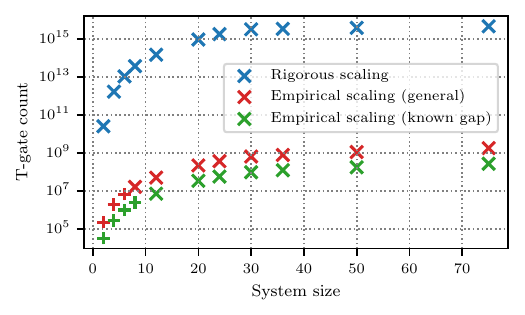}
    \caption{Estimated number of T gates required to simulate linear Hubbard models with open boundary conditions of various sizes for one unit of time. The $+$ markers refer to simulated systems and the $\times$ markers to extrapolated sizes.}
    \label{fig:T_scaling}
\end{figure}

While we have only simulated and given resource estimates for 1D lattices so far, it seems plausible that the costs per unit of time for an extensive energy budget will be very similar for 2D systems with the same number of sites, because the geometry of the lattice only enters our costing through the number of edges, $n_\text{edge}$, which are at most twice as many in a 2D lattice with the same number of sites (depending on whether periodic boundaries conditions are applied or not). Firstly, the upper bound on the Hamiltonian spectral norm, \cref{eqn:H_spectral_bound}, will at most increase by a constant factor proportional to the number of edges. Secondly, the cost of the $H$-evolution again grows at most linearly with the number of edges, as the $S_1$ tiling considered here can be directly applied to more complex lattices. Finally, the worst-case Trotter error constant for evolution with $H$, \cref{eqn:general_trotter_constant}, also grows linearly with the number of edges \cite{campbell2022early}. The T gate cost of simulating a 2D Hubbard model for a single time unit should therefore only differ from the results shown here by a modest constant prefactor.

\begin{table}[h]
\renewcommand{\arraystretch}{1.2}
\setlength{\tabcolsep}{10pt}
\setlength{\aboverulesep}{0pt}
\setlength{\belowrulesep}{0pt}
\centering
\begin{tabular}{lcc}
\toprule
 & \multicolumn{2}{c}{T-gate count} \\
\cmidrule(l){2-3}
Regime & 6 sites & 36 sites \\
\midrule
Rigorous              & $1.0 \times 10^{13}$ & $3.3 \times 10^{15}$ \\
Empirical (general)   & $6.6 \times 10^{6}$  & $7.7 \times 10^{8}$ \\
Empirical (known gap) & $9.8 \times 10^{5}$  & $1.2 \times 10^{8}$ \\
\bottomrule
\end{tabular}
\caption{Estimates of fault-tolerant resources required to simulate a single time unit of Lindblad evolution for 6-site and 36-site linear Hubbard models. The number of logical qubits required, 14 and 74 respectively, is the same for all parameter regimes.}
\label{tab:T_counts}
\end{table}

\Cref{tab:T_counts} shows  that for a 36-site system we require more than $7.7\times10^{8}$ T gates, but the total cost to convergence can be much higher once the mixing time is taken into account. The numerical evidence here and in other papers \cite{zhan2026rapid} suggests that a single time unit is likely an underestimate of the total mixing time. In fact, the much smaller 1D 8-site model already requires around 30 time units for convergence. It is known that the scaling of the mixing time also depends on the choice of jump operators \cite{zhan2026rapid}, with system-spanning operators being capable of inducing rapid mixing in some cases, although at additional cost per step. We did not fully explore different mixers and so faster convergence times are possible. However, even in the case of rapid mixing, the mixing time still scales logarithmically with the system size, and so a single time unit is likely a conservative benchmark.

The cost is compounded by the fact that the required Hamiltonian simulation time scales faster than linearly with the total Lindblad simulation time: Not only are more simulation steps needed if the Lindblad simulation time is longer, each simulation step has to be performed with greater accuracy in order to stay within the total error budget.  This raises the question of what drives up the resource costs: \Cref{fig:evolution_time_breakdown} shows that the overall algorithm is dominated by the Hamiltonian simulation required to approximate the filter integral, \cref{eqn:W_definition}. Unfortunately, these evolution times are crucial to achieve sufficiently low fixed-point energy error. At an abstract level, the algorithm has to resolve the energy differences on the scale $S_\omega / \Delta$ for each dissipative step to ensure a decrease in energy. 

\begin{figure}
    \centering
    \includegraphics[scale=1.0]{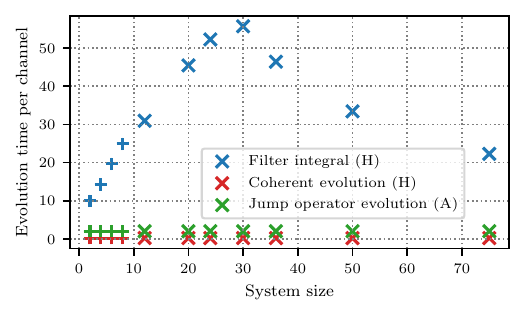}
    \caption{Share of the different primitives in the total unitary evolution for the empirical scaling with known gap: The $+$ markers refer to simulated systems and the $\times$ markers to extrapolated sizes. The $H$ evolution for the filter dominates the overall cost, with evolution by $A$ being much smaller. The required evolution time initially increases as the spectral gap becomes smaller, and then decays for larger systems as the targeted extensive energy error, $\varepsilon = 0.005 \cdot n_\text{sites}$, takes over, which puts less stringent demands on the filter-integral approximation.}
    \label{fig:evolution_time_breakdown}
\end{figure}

\section{Conclusion}

Our analysis bridges the gap between the theoretical ideas behind Lindblad-based ground-state preparation and the practical costs of running these algorithms on error-corrected quantum hardware. In addition to providing rigorous upper bounds, we combine scaling arguments with prefactors obtained from concrete gate‑level resource estimates to give a clearer picture of what will be required to make Lindblad state preparation work in practice.

We identify the spectral filtering integral as the most expensive subroutine in the algorithm, dominating the overall runtime and resource costs. While the asymptotic scaling of the filter cost is bounded from below by the time-bandwidth uncertainty, we found that there exists a speed-precision trade-off, which allows one to trade higher energy error (more leakage) for faster convergence times (stronger dissipation).

We further study the fermionic Hubbard model as a benchmark system and compile the Lindblad evolution down to the gate level for concrete resource estimation.
Through numerical simulations we identify optimal parameters that improve on the theoretical error bounds by orders of magnitudes and allow us to predict the resources required for classically non-trivial systems. While our approach has only moderate qubit requirements we find that the T-gate cost of running even a single time unit of Lindblad evolution is substantial, often much higher than the subsequent quantum phase estimation routine, for which this procedure might provide the initial state. 

The bottleneck of implementing the Hamiltonian evolution for the filter integral points to a number of potential avenues for improvement that have been developed in the literature for other costing models. One such avenue is using locality for spatial truncation \cite{hahn2026efficient,mizuta2026lattice}, although these works focus on systems where locality holds at the qubit level. Further optimizations of the energy filter are discussed in \cite{elman2026localityfilterdesigndissipative}, together with a proposal for exploiting locality in a block-encoded evolution of the Hubbard model. An interesting future direction would be to apply these techniques to the Trotter-based approach of this paper.
The comparatively low qubit requirements also make it feasible to explore other simulation techniques that require more than one ancilla, as outlined in \cite{ding2024simulating,cleve1612efficient}. Together, these results suggest a path for practical implementations of the Lindblad-based ground-state preparation on fault-tolerant quantum computers.

\section{Acknowledgments}

M.B. is a Sustaining Innovation Postdoctoral Research Associate at Astex Pharmaceuticals and thanks Astex Pharmaceuticals for funding.

\bibliography{main}

\appendix

\crefalias{section}{appendix}

\section{Properties of the trace norm}
\label{sec:trace_norm}

In this appendix we summarize some useful properties of the trace norm, defined
\begin{align}
\tnorm{A} = \Tr\sqrt{A^\dagger A} = \sum_i \sigma_i(A), \quad \text{}
\end{align}
where $A$ is a linear operator and $\{ \sigma_i(A) \}$ are the singular values of A.

\begin{itemize}
    \item Unitary invariance: For unitary $U$
\begin{align}
    \tnorm{U^\dagger AU} = \tnorm{A}
\end{align}
\item Contractivity \cite{ruskai1994beyond}: For any CPTP map $\Phi$ and any density operators $\rho$, $\sigma$
\begin{align}
    \tnorm{\Phi(\rho-\sigma)} \leq \tnorm{\rho-\sigma}
\end{align}
\item Spectral bound: 
\begin{align}
    \tnorm{A} \geq \norm{A},
\end{align}
where $\norm{\, \cdot \,}$ denotes the spectral norm.
\item Hölder's inequality (submultiplicativity):
\begin{align}
    \tnorm{AB}\leq \tnorm{A} \norm{B}.\label{eqn:submultiplicativity}
\end{align}
Combined with $\tnorm{A^\dagger} = \tnorm{A}$ this can be applied on either side of the argument, so that $\tnorm{A B C} \leq \norm{A} \tnorm{B} \norm{C}$.
\end{itemize}

For the error bounds we also need to measure the size of superoperators, for which we use the norm that the trace norm induces on them,
\begin{align}
    \norm{\mathcal{E}}_1 := \sup_{X = X^\dagger \neq 0} \frac{\tnorm{\mathcal{E}[X]}}{\tnorm{X}} . \label{eqn:induced_norm}
\end{align}
We denote this by the same symbol as the trace norm; it is to be read as \cref{eqn:induced_norm} whenever the argument is a superoperator. Restricting the supremum to Hermitian $X$ is sufficient for our purposes, because all superoperators we consider map Hermitian operators to Hermitian operators. This norm inherits two properties that we use repeatedly:
\begin{itemize}
    \item Submultiplicativity under composition, which holds for any induced norm,
    \begin{align}
        \norm{\mathcal{E}_1 \mathcal{E}_2}_1 \leq \norm{\mathcal{E}_1}_1 \norm{\mathcal{E}_2}_1 . \label{eqn:induced_submult}
    \end{align}
    This is what allows us to separate products of superoperators, noting that the product denotes composition rather than an inner product.
    \item Contractivity of Lindbladian evolution: any Lindbladian $\Lind$ generates a CPTP semigroup, so decomposing a Hermitian $X = X_+ - X_-$ into positive and negative parts of orthogonal support and using positivity and trace preservation gives $\tnorm{e^{\Lind t}[X]} \leq \tnorm{e^{\Lind t}[X_+]} + \tnorm{e^{\Lind t}[X_-]} = \Tr X_+ + \Tr X_- = \tnorm{X}$, and therefore
    \begin{align}
        \norm{e^{\Lind t}}_1 \leq 1 . \label{eqn:semigroup_contractive}
    \end{align}
\end{itemize}

\section{Useful integral formulas}
\label{sec:duhamel}

The approach described in this section is the main building block for deriving commutator bounds \cite{childs2021theory}. Consider a contractive evolution induced by a (super-) operator, $e^{\Lind (t-t_0)}$ with $t>t_0$, where the exponential is defined in the power series with compositions of superoperators as the multiplication. We can consider the first derivative of the exponential
\begin{align}
    \partial_t e^{\Lind (t-t_0)} = e^{\Lind (t-t_0)}\Lind \,
\end{align}
and by integration we obtain an integral formulation of the original evolution
\begin{align}
    e^{\Lind (t-t_0)} = \int_{t_0}^t  e^{\Lind (t_1-t_0)} \Lind dt_1  + \mathbb{1}.
\end{align}
Repeating this process with $n$'th-order derivatives we obtain
\begin{align}
    e^{\Lind (t-t_0)} = \int_{t_0}^t dt_n \,\int_{t_0}^{t_{n}} dt_{n-1}... \int_{t_0}^{t_{2}} dt_1 \, e^{\Lind (t_1-t_0)} (\Lind)^n + \sum_{l=0}^{n-1} \frac{(t-t_0)^l}{l!}(\Lind)^l.
\end{align}
The second term here is just the Taylor series of the exponential to order $n-1$, so it follows that all higher-order terms in the series are captured in the $n$-fold integration. This is very useful, because inside the trace norm we can easily evaluate the integral:
\begin{align}
    \tnorm{e^{\Lind (t-t_0)}[\rho]-\sum_{l=0}^{n-1} \frac{(t-t_0)^l}{l!}(\Lind)^l[\rho]} &= \tnorm{\int_{t_0}^t dt_n \,\int_{t_0}^{t_{n}} dt_{n-1}... \int_{t_0}^{t_{2}} dt_1 \, e^{\Lind (t_1-t_0)} (\Lind)^n [\rho]}\\
    &\leq \int_{t_0}^t dt_n \,\int_{t_0}^{t_{n}} dt_{n-1}... \int_{t_0}^{t_{2}} dt_1 \, \tnorm{e^{\Lind (t_1-t_0)} (\Lind)^n[\rho]}\\
    &\leq \int_{t_0}^t dt_n \,\int_{t_0}^{t_{n}} dt_{n-1}... \int_{t_0}^{t_{2}} dt_1 \, \tnorm{(\Lind)^n[\rho]}\\
    &= \frac{(t-t_0)^n}{n!} \tnorm{(\Lind)^n[\rho]} .
\end{align}

In addition we frequently make use of Duhamel's principle (variation of constants), which states that
\begin{align}
    \dot{x}(t) = \mathcal{L} x(t) + g(t) \quad \Rightarrow \quad 
    x(t) = e^{\mathcal{L} (t-t_0)} x(t_0) + \int_{t_0}^t e^{\mathcal{L} (t-s)} g(s) ds .
\end{align}
In the following the case where $x(t_0)=0$ is most relevant, because it allows us to extract the leading-order time dependence by rewriting a term that vanishes as $t\to t_0$ as a time integral. 

\section{Summary of the algorithm}
\label{sec:algorithm_description}
The algorithm in \cite{ding2024single} is defined as a series of steps
\begin{align}
    \rho_m \to \rho_{m+1} &= e^{-i H \tau} \Tr_a \left[ W(\sqrt{\tau}) ( \ket{0} \bra{0} \otimes \rho_m ) W^\dagger(\sqrt{\tau}) \right]  e^{i H \tau} \\
    &= e^{-i H \tau} \mathcal{W}(\tau)[\rho_{m}] e^{i H \tau} ,
\end{align}
where $W(\sqrt{\tau})$ is the Trotterized unitary
\begin{align}
     W(\sqrt{\tau})  := e^{iH S_S} \prod_{l = -M_S}^{M_S} e^{-\frac{i\sqrt{\tau}}{2} \tilde{H}_l} \prod_{l=M_S}^{-M_S} e^{-\frac{i\sqrt{\tau}}{2} \tilde{H}_l} e^{-iHS_S} , \label{eqn:W_definition}
\end{align}
where $S_S = M_S \tau_S$ is the time evolution needed for the truncated filter integral, which runs over $[-S_S, S_S]$. $\mathcal{W}(\tau)$ is the corresponding channel
\begin{align}
    \mathcal{W}(\tau)[\rho_{m}] := \Tr_a \left[ W(\sqrt{\tau}) ( \ket{0} \bra{0} \otimes \rho_m ) W^\dagger(\sqrt{\tau}) \right] ,
\end{align}
and $\tilde{H}_l$ is the $l$'th term in the dilated approximate transition operator:
\begin{align}
 \tilde{H}_l &:=
    \begin{pmatrix}
            0 & ( f(s_l) w_l A(s_l))^\dagger \\
            f(s_l) w_l A(s_l) & 0 
        \end{pmatrix} \\
        &= w_l (\sigma_x  \re f(s_l) + \sigma_y \im f(s_l)) A(s_l)\\
        &=: \sigma_l \otimes A(s_l) ,
\end{align}
where we assume that $A$ is Hermitian. In practice we can implement $W(\sqrt{\tau})$ with shorter Hamiltonian evolution by noting that it is equal to 
\begin{align}
W(\sqrt{\tau})=& \left(
\overrightarrow{\prod_{\ell}}
\, \tilde{A}_{\ell}(\sqrt{\tau})
\bigl( I \otimes e^{i H \tau_S} \bigr)
\right)
\left(
\overleftarrow{\prod_{\ell}}
\bigl( I \otimes e^{-i H \tau_S} \bigr)
\tilde{A}_{\ell}(\sqrt{\tau})
\right) \\
\tilde{A}_\ell(\sqrt{\tau}) :=& e^{-i \frac{\sqrt{\tau}}{2} \sigma_\ell \otimes A} .
\end{align}
The algorithm can therefore be implemented with (approximate) access to $e^{-iHt}$, $\tilde{A}$ and a measure-reset operation. 

\section{Comparison to other algorithms}

There exists a variety of simulation algorithms in the literature, which we compare to the single-ancilla algorithm \cite{ding2024single}. 
The block-encoding-based higher-order schemes from \cite{cleve1612efficient} have higher ancilla overhead and therefore may be less suitable for early applications, so we do not consider them here.

In \cite{ding2024single} the authors also present a second, discrete-time algorithm that has more favorable scaling in the mixing time but doesn't track the exact evolution of the underlying Lindblad operator, which is why we do not consider it in the rigorous regime. A follow-up work by overlapping authors \cite{zhan2026rapid} also focuses on continuous-time dynamics. The empirical regime makes the exact same assumptions as the discrete algorithm, but does not implement additional Trotter splitting of $W(\tau)$, to stay consistent with the rigorous regime. It would therefore be interesting to see the effect of the additional Trotterization on our empirical results. For the single-time-unit costing this is unlikely to make a difference, since the time steps there are already very large, but it would likely improve the scaling with $T_\text{mix}$.

Another single-ancilla algorithm for both thermal and ground-state preparation is presented in~\cite{ding2025end}. Rather than simulating a prescribed Lindbladian, it repeatedly evolves the system together with a single bath qubit under a time-dependent system-bath Hamiltonian, then resets the bath. The effective Lindbladian emerges at second order in the coupling strength $\alpha$, so that $\alpha^2$ plays the role of our time step $\tau$. This differs from the approach studied here in two main respects. First, the coherent evolution is part of the joint system-bath evolution rather than split off as a separate $e^{-iH\tau}$ step, so only forward Hamiltonian evolution is needed. The price is a Lamb-shift term in the effective dynamics. Second, the filter is a Gaussian that is narrow in frequency, with width ${\sim}\sigma^{-1}$ and $\sigma \sim \Delta^{-1}$. Each channel application therefore only drives transitions near a randomly sampled bath frequency $\omega$. Our filter, by contrast, covers the full range of energy-decreasing transitions $[-S_\omega, 0]$ in every step. The authors of~\cite{ding2025end} note that their narrow filter simplifies the fixed-point analysis, but may increase the total simulation time compared to broader filters as in~\cite{ding2024single}. Because the filter is generated by a time-ordered evolution rather than a quadrature sum inside a single exponential, the error contributions of the two schemes do not map one-to-one. 

Shortly before submission of this work, Elman submitted independent work~\cite{elman2026localityfilterdesigndissipative} which reduces the cost of the filter integral in two ways: by optimizing the filter coefficients on a fixed grid of evolution times, which can be cast as a linear program, and by exploiting locality to restrict the Hamiltonian evolution in the filter to a patch of sites around the coupling operator, which makes the per-branch cost independent of system size at fixed patch radius. 
That work includes density-matrix simulations of the Lindblad dynamics for a range of molecules as well as Ising and Hubbard chains, together with fault-tolerant cost estimates for a single channel application.
Ref.~\cite{elman2026localityfilterdesigndissipative} likewise observes that the high non-Clifford gate counts of single-ancilla dissipative schemes temper near-term prospects, and its improvements target the filter integral that we identify as the dominant cost.
The two approaches differ in several respects.
The simulations in~\cite{elman2026localityfilterdesigndissipative} evolve the Lindblad equation directly rather than the compiled circuit, and therefore do not capture the dilation and Lindblad-splitting errors at finite time step, which, while leaving the fixed point of an ideal filter unchanged (see Appendix~E7), can alter the mixing time.
Its cost estimates are based on block-encoded Hamiltonian simulation rather than Trotterization, which requires additional ancilla qubits, and are reported per channel application.
Applying the techniques of~\cite{elman2026localityfilterdesigndissipative} in the Trotterized setting considered here is a promising direction for future work.

\section{Error bounds}
\label{sec:errors}

The algorithm approximates the exact Lindbladian simulation up to a unitary transformation\footnote{We do not need to simulate this rotation because it has no impact on the final-state energy.}
\begin{align}
    \bar{\rho}_{M_T} &=  e^{-i H S_S} \rho_{M_T} e^{i H S_S} \simeq \rho(T) . \
\end{align}
The rotated density matrix follows the same dynamics as the original one
\begin{align}
        \bar{\rho}_{m+1} &= e^{-i H \tau} \Tr_a \left[e^{-i H S_S}  W(\sqrt{\tau}) e^{i H S_S} ( \ket{0} \bra{0} \otimes \bar{\rho}_m ) e^{-i H S_S} W^\dagger(\sqrt{\tau}) e^{i H S_S} \right] e^{i H \tau} \\
        &= e^{-i H \tau} \Tr_a \left[ \prod_{l = -M_S}^{M_S} e^{-\frac{i\sqrt{\tau}}{2} \tilde{H}_l} \prod_{l=M_S}^{-M_S} e^{-\frac{i\sqrt{\tau}}{2} \tilde{H}_l}  ( \ket{0} \bra{0} \otimes \bar{\rho}_m ) \prod_{l = -M_S}^{M_S} e^{\frac{i\sqrt{\tau}}{2} \tilde{H}_l} \prod_{l=M_S}^{-M_S} e^{\frac{i\sqrt{\tau}}{2} \tilde{H}_l} \right] e^{i H \tau} \\
        &=: e^{-i H \tau} \mathcal{W}(\tau)[\bar{\rho}_{m}] e^{i H \tau} .
\end{align}
The aim is therefore to bound the difference between the approximate density matrix $\bar{\rho}_M$ and the exact Lindblad evolution for time $T=M\tau$ in the trace norm\footnote{The differences in the energy that we use can be directly derived from this via the submultiplicativity of the trace norm: One just has to multiply the bound by the spectral norm of the Hamiltonian.}, i.e.
\begin{align}
    \tnorm{\rho(M\tau) - \bar{\rho}_M} \leq M \tnorm{\rho(\tau)-\hat{\rho}_1} ,
\end{align}
where the inequality follows from the standard contractivity argument \cite{ruskai1994beyond}, meaning that it is sufficient to analyze the error in a single step and then extrapolate linearly.
The total simulation error arises from a series of approximations:
\begin{enumerate}
    \item Lindblad splitting error: 
    \begin{align}
        e^{\Lind \tau} \rho \simeq e^{\Lind_H \tau} e^{\Lind_K \tau} \rho
    \end{align}
    \item Dilation error, meaning the error from approximating the split Lindblad dynamics by the dilation step:
    \begin{align}
         \tilde{\rho}_{1} = e^{-iH \tau} \Tr_a (e^{-i \sqrt{\tau}\tilde{K}} \ket{0} \bra{0} \otimes \rho(0) e^{i \sqrt{\tau}\tilde{K}})e^{iH \tau} 
    \end{align}
    \item Quadrature error, from approximating the integral in the transition operator $K$ by quadrature 
    \begin{align}
    K &= \int_{-\infty}^{\infty} f(s) A(s) ds, \quad \tilde{K} = \begin{pmatrix}
            0 & K^\dagger \\
            K & 0 
            \end{pmatrix}\\
    K_S &= \sum_{l=-M_S}^{M_S} w_l f(s_l) e^{iHs_l} A e^{-iHs_l}\\
        \tilde{K}_S &= \begin{pmatrix}
            0 & ( \sum_{l=-M_S}^{M_S} f(s_l) w_l A(s_l))^\dagger \\
            \sum_{l=-M_S}^{M_S} f(s_l) w_l A(s_l) & 0 
        \end{pmatrix} := \sum_{l=-M_S}^{M_S} \tilde{H}_l
    \end{align}
    with density matrix
    \begin{align}
        \tilde{\rho}_{1,s} = e^{-iH \tau} \Tr_a (e^{-i \sqrt{\tau}\tilde{K}_s} \ket{0} \bra{0} \otimes \rho(0) e^{i \sqrt{\tau}\tilde{K}_s})e^{iH \tau} \simeq \tilde{\rho}_1
    \end{align}
    \item Quadrature Trotterization error, from splitting the sum of quadrature points into different exponentials:
    \begin{align}
        e^{-i \sqrt{\tau} \tilde{K}_S } \simeq e^{-iH S_S} W(\sqrt{\tau}) e^{iH S_S} := \prod_{l = -M_S}^{M_S} e^{-\frac{i\sqrt{\tau}}{2} \tilde{H}_l} \prod_{l=M_S}^{-M_S} e^{-\frac{i\sqrt{\tau}}{2} \tilde{H}_l}
    \end{align}
    \item Trotter error: The error incurred from implementing the evolution $A$ and $H$ with a first- and second- order Trotter formula, respectively.
    \begin{align}
        \bar{\rho}_{1,\text{trott}} = \bar{\rho}|_{\tilde{A}\to S_1(\tilde{A}), e^{-iHt}\to S_2(e^{-iHt}) }
    \end{align}
\end{enumerate}

Using the triangle inequality these errors add up linearly:
\begin{align}
\begin{split}
    \norm{\rho(\tau) - \bar{\rho}_{1,\text{trott}}}_1 \leq & \underbrace{\norm{e^{(\mathcal{L}_H + \mathcal{L}_K) \tau}\rho(0) - e^{\mathcal{L}_H \tau} e^{\mathcal{L}_K \tau} \rho(0)}_1}_{\text{Lindblad splitting error}} 
    + \underbrace{\norm{e^{\mathcal{L}_H \tau} e^{\mathcal{L}_K \tau} \rho(0) - \tilde{\rho}_1}_1}_{\text{Dilation error}} \\
    & + \underbrace{\norm{\tilde{\rho}_1-\tilde{\rho}_{1,s}}_1}_{\text{Integral quadrature error}} 
    + \underbrace{\norm{\tilde{\rho}_{1,s}-\bar{\rho}_{1}}_1}_{\text{Time sum Trotterization error}} 
    +\underbrace{\tnorm{\bar{\rho}_{1}-\bar{\rho}_{1,\text{trott}}}}_{\text{Trotter evolution error}}.
\end{split}
\end{align}

The result for these errors is restated here: 

\begin{proposition_restatement}
    Let $\bar{\rho}_{M,\text{trott}}$ be the state after $M$ steps of the fully compiled algorithm with step duration $\tau$, total time $T = M\tau$, quadrature spacing $\tau_S$ with $2M_S+1$ points, and Trotter step sizes $\tau_{T,1} = \tau/M_{T,1}$ and $\tau_{T,2} = \tau_S/M_{T,2}$ for the coherent and dissipative $H$ evolutions. Then for any initial state $\rho(0)$
    \begin{align}
    \begin{split}
        \tnorm{\rho(T) - \bar{\rho}_{M,\text{trott}}} \leq T \Big[ & \tau \left( C_S + C_D + C_{T,1} \right) + \tau^2 C_{T,2} + C_Q \\
        & + \Big( 2 \tau_{T,1}^2 + 4 (2M_S+1) \frac{\tau_S}{\tau} \tau_{T,2}^2 \Big) C_H + \tau_S^2 \, C_{\tilde{A}} \Big] ,
    \end{split}
    \end{align}
    with the constants
    \begin{align}
        C_S &:= 4 \norm{H} \norm{A}^2 F_\text{tot}^2 , \\
        C_D &:= \frac{8}{3} \norm{A}^4 F_\text{tot}^4 , \\
        C_Q &:= 4 \norm{A}^2 F_\text{tot} F(M_S) , \\
        \begin{split}
    C_{T,1} &:=  \sum_{k=1}^{K}  \left[ \frac{1}{24} \norm{\ad^3_{V_{k}}(v_{k})} + \frac{1}{96} \norm{\ad^3_{v_{k}}( V_{k})} \right. \\
        & \quad \left. + \frac{1}{24}  \Big\| \ad^2_{V_k} v_k
        -  \frac{1}{2}\ad^2_{v_k} V_k \Big\|
        \Big( 11 \norm{V_{k+1}} + 3 \norm{V_k} + 3 \norm{v_k} + 8\sum_{k'=k+1}^K \norm{v_{k'}} \Big)\right] ,
    \end{split}\\
        C_{T,2} &:= \sum_{k=1}^{K}  \frac{1}{144} \left( \norm{\ad^2_{V_{k}} (v_{k})} +  \frac{1}{2} \norm{\ad^2_{v_{k}}(V_{k})} \right)^2  ,\\
        C_H &:= \frac{1}{12} \sum_{\gamma_1=1}^{\Gamma} \left\| \left[ \sum_{\gamma_3=\gamma_1+1}^{\Gamma} H_{\gamma_3}, \left[ \sum_{\gamma_2=\gamma_1+1}^{\Gamma} H_{\gamma_2}, H_{\gamma_1} \right] \right] \right\| + \frac{1}{24} \sum_{\gamma_1=1}^{\Gamma} \left\| \left[ H_{\gamma_1}, \left[ H_{\gamma_1}, \sum_{\gamma_2=\gamma_1+1}^{\Gamma} H_{\gamma_2} \right] \right] \right\| , \\
        C_{\tilde{A}} &:= \norm{A}^2 \sum_{l=-M_S}^{M_S} \left| \operatorname{Im} f(s_l) \right| \left| \operatorname{Re} f(s_l) \right| .
    \end{align}
    Here $F(M_S)\propto \exp(-\sqrt{aM_S})$ is the filter tail mass bounded in \cref{eqn:filter_tail}, $F_\text{tot} = \sum_{l=-\infty}^{\infty} |f(s_l)|\tau_S$, $v_k$, $V_k$ and $K = 2M_S+1$ are defined in \cref{prop:quad_trott_error}, and $C_H$ is the second-order Trotter constant of the system Hamiltonian, $H=:\sum_{\gamma=1}^{\Gamma} H_\gamma$.
\end{proposition_restatement}

\begin{proof}
    By contractivity of the single-step channels the $M$-step error is at most $M$ times the single-step error \cref{eqn:error_overview}. 
    The single-step contributions are $\tau^2 C_S$ (\cref{prop:splitting_error}), $\tau^2 C_D$ (\cref{prop:dilation_error}), $\tau C_Q$ (\cref{prop:filter_tail}), $\tau^2 C_{T,1} + \tau^3 C_{T,2}$ (\cref{prop:quad_trott_error} at $t = \sqrt{\tau}$), $2\tau \tau_{T,1}^2 C_H + 4(2M_S+1) \tau_S \tau_{T,2}^2 C_H$ (\cref{prop:H_trotter_error}) and $\tau \tau_S^2 C_{\tilde{A}}$ (\cref{prop:A_trott_err}). Multiplying by $M = T/\tau$ gives \cref{eqn:total_error}. 
\end{proof}

As discussed in \cref{sec:dilation_error}, this result extends to the statistical sampling of multiple jump operators in the Lindbladian, $\mathcal{L}_K = \sum_k p_k \mathcal{L}_{K,k}$, which can be simulated by classically sampling the jump operators.

\subsection{Lindblad splitting error}
\label{sec:splitting_error}

The coherent and the dissipative part of the Lindbladian in \cref{eqn:lindblad_def} do not commute, so splitting the single-step evolution into the two exponentials $e^{\Lind_H \tau} e^{\Lind_K \tau}$ incurs a Trotter-like error, which we bound in the induced superoperator norm \cref{eqn:induced_norm}.

\begin{proposition}
\label{prop:splitting_error}
    Let $\Lind_H$ and $\Lind_K$ be the coherent and dissipative parts of \cref{eqn:lindblad_def} with a single transition operator $K$. Then for every $t \geq 0$
    \begin{align}
        \norm{e^{(\Lind_H + \Lind_K) t} - e^{\Lind_H t} e^{\Lind_K t}}_1
        \leq \frac{t^2}{2} \norm{\left[ \Lind_H, \Lind_K \right]}_1
        \leq 4 t^2 \norm{H} \norm{K}^2
        \leq t^2 C_S ,
    \end{align}
    with the splitting constant
    \begin{align}
        C_S := 4 \norm{H} \norm{A}^2 F_\text{tot}^2 . \label{eqn:C_S_def}
    \end{align}
\end{proposition}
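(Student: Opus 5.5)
The bound has three inequalities, and I will prove them in sequence. The first is a standard second-order splitting estimate, obtained with the Duhamel principle of \cref{sec:duhamel}. The second is an elementary norm bound on the commutator of superoperators. The third is a bound on $\norm{K}$ in terms of the filter.

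For the first inequality, set $\mathcal{E}(t) := e^{\Lind_H t} e^{\Lind_K t}$. Differentiating gives
\begin{align}
\partial_t \mathcal{E}(t) = \Lind_H \mathcal{E}(t) + e^{\Lind_H t}\Lind_K e^{\Lind_K t} = (\Lind_H+\Lind_K)\mathcal{E}(t) + \bigl(e^{\Lind_H t}\Lind_K e^{-\Lind_H t} - \Lind_K\bigr)\mathcal{E}(t).
\end{align}
Here $e^{-\Lind_H t}$ is well defined because it is the unitary conjugation by $e^{iHt}$. Duhamel's principle with $x(0)=0$ then yields
\begin{align}
\mathcal{E}(t)-e^{\Lind t} = \int_0^t e^{\Lind (t-s)}\bigl(e^{\Lind_H s}\Lind_K e^{-\Lind_H s}-\Lind_K\bigr)\mathcal{E}(s)\,ds .
\end{align}
The inner difference can be rewritten as $\int_0^s e^{\Lind_H u}[\Lind_H,\Lind_K]e^{-\Lind_H u}\,du$. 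Taking the induced norm \cref{eqn:induced_norm}, I use submultiplicativity \cref{eqn:induced_submult} and contractivity \cref{eqn:semigroup_contractive} for $e^{\Lind(t-s)}$ and $\mathcal{E}(s)$. The conjugations $e^{\pm\Lind_H u}$ are unitary channels, so they have induced norm one. This leaves $\int_0^t\int_0^s du\,ds\,\norm{[\Lind_H,\Lind_K]}_1 = \frac{t^2}{2}\norm{[\Lind_H,\Lind_K]}_1$.

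For the second inequality, I bound $\norm{[\Lind_H,\Lind_K]}_1 \le 2\norm{\Lind_H}_1\norm{\Lind_K}_1$. Hölder's inequality \cref{eqn:submultiplicativity} gives $\norm{\Lind_H}_1\le 2\norm{H}$ and $\norm{\Lind_K}_1\le \norm{K}^2+\tfrac12\norm{K}^2+\tfrac12\norm{K}^2 = 2\norm{K}^2$. This produces $8\norm{H}\norm{K}^2$, so combined with $t^2/2$ it gives exactly $4t^2\norm{H}\norm{K}^2$.

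A sharper route is to note that $[\Lind_H,\cdot]$ acts as a derivation. One can write $[\Lind_H,\Lind_K]$ as the dissipator built from $[H,K]$ and $K$, with cross terms, and then bound it by $4\norm{H}\norm{K}^2$ via $\norm{[H,K]}\le 2\norm{H}\norm{K}$. This would give a better constant, but the crude bound already suffices for the stated $4t^2\norm{H}\norm{K}^2$, so I will use the simple product bound.

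For the final inequality, I need $\norm{K}\le \norm{A}F_\text{tot}$. Since the constant is used for the implemented operator, I interpret $K$ here as either the quadrature operator $K_S=\sum_l w_l f(s_l)A(s_l)$ or its infinite-grid version. The triangle inequality and unitary invariance $\norm{A(s_l)}=\norm{A}$ give $\norm{K_S}\le\norm{A}\sum_l |f(s_l)|w_l\le \norm{A}F_\text{tot}$, using $w_l\le\tau_S$.

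The main obstacle is this identification. If $K$ is meant as the exact integral $\int f(s)A(s)\,ds$, then the natural bound is $\norm{A}\norm{f}_{L^1}$ rather than the discrete sum $F_\text{tot}$. Relating the two requires either the quadrature analysis of \cref{sec:quad_error} or stating the splitting bound for $K_S$ directly, which is consistent with the other step errors being measured on the compiled operator. I would adopt the latter and remark that the same argument holds with $\norm{f}_{L^1}$ in place of $F_\text{tot}$ for the exact $K$.
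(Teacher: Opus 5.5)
Your first two inequalities are correct. For the first one you take a slightly different route from the paper. You write the inhomogeneity as $\bigl(e^{\Lind_H s}\Lind_K e^{-\Lind_H s}-\Lind_K\bigr)\mathcal{E}(s)$, which uses that $e^{\Lind_H s}$ is invertible (it is a unitary conjugation). A single fundamental-theorem-of-calculus step then produces the conjugated commutator. The paper never inverts a semigroup. It writes the inhomogeneity as $[e^{\Lind_H s},\Lind_K]\,e^{\Lind_K s}$ and applies Duhamel a second time to $h(s)=[e^{\Lind_H s},\Lind_K]$, which satisfies $h'=\Lind_H h+[\Lind_H,\Lind_K]e^{\Lind_H s}$ with $h(0)=0$. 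Both routes give exactly $\frac{t^2}{2}\norm{[\Lind_H,\Lind_K]}_1$. Yours is a little shorter, while the paper's works verbatim for any two Lindbladians, not only when one of them generates a group of isometries. Your commutator estimate is identical to the paper's.

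The gap is in the last inequality. In the proposition, and in the splitting term of \cref{eqn:error_overview} where it is used, $K$ is the exact filtered operator $\int f(s)A(s)\,ds$ of \cref{eqn:K_energy}. The replacement $K\to K_S$ is charged separately to the quadrature error. Proving $\norm{K_S}\le\norm{A}F_\text{tot}$ therefore establishes a different statement, and your fallback with $\norm{f}_{L^1}$ yields a different constant than $C_S$.

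The missing observation is that no quadrature analysis is needed: the infinite-grid sum you mention is exactly $K$. In the eigenbasis of $H$ one has $(A(s))_{ij}=e^{i\Delta_{ij}s}A_{ij}$ with $\Delta_{ij}=E_i-E_j$, and Poisson summation gives
\begin{equation*}
\tau_S\sum_{l\in\mathbb{Z}} f(l\tau_S)\,e^{i\Delta_{ij}l\tau_S}=\sum_{k\in\mathbb{Z}}\hat f\!\left(\Delta_{ij}+\frac{2\pi k}{\tau_S}\right)=\hat f(\Delta_{ij}) .
\end{equation*}
This holds because $|\Delta_{ij}|\le 2\norm{H}\le S_\omega$ and $2\pi/\tau_S\ge 2S_\omega$ place every alias with $k\neq 0$ outside $(-S_\omega,0)$, where $\hat f$ vanishes. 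Hence $K=\tau_S\sum_{l\in\mathbb{Z}}f(s_l)A(s_l)$ exactly. The triangle inequality with $\norm{A(s_l)}=\norm{A}$ then gives $\norm{K}\le\norm{A}F_\text{tot}$ for the exact operator. This is the identity \cref{eqn:poisson} that the paper invokes. It relies on the standing assumptions $\tau_S\le\pi/S_\omega$ and $S_\omega\ge 2\norm{H}$, which you should state as hypotheses for the final inequality.

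A minor remark on your aside: the derivation identity does not improve the constant. It gives $[\Lind_H,\Lind_K]\rho=K'\rho K^\dagger+K\rho K'^\dagger-\tfrac12\{K'^\dagger K+K^\dagger K',\rho\}$ with $K'=-i[H,K]$. The induced norm of this is at most $4\norm{[H,K]}\norm{K}\le 8\norm{H}\norm{K}^2$, the same as the crude bound. A genuine improvement is replacing $\norm{H}$ by $\min_c\norm{H-c\mathbb{1}}$, since $\Lind_H$ is invariant under such shifts. You do not use the aside, so this does not affect your proof.
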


\begin{proof}
    Abbreviate $\Lind_1 := \Lind_H$ and $\Lind_2 := \Lind_K$. Both are Lindbladians, as is their sum, so \cref{eqn:semigroup_contractive} applies to all three semigroups appearing below.

    We first reduce the difference to a single commutator: The split evolution $g(t) := e^{\Lind_1 t} e^{\Lind_2 t}$ satisfies
    \begin{align}
        g'(t) = \Lind_1 g(t) + g(t) \Lind_2 = \left( \Lind_1 + \Lind_2 \right) g(t) + \left[ g(t), \Lind_2 \right] ,
    \end{align}
    i.e.\ the same linear equation as the exact evolution up to the inhomogeneity $[g(t), \Lind_2]$, which is a known function of $t$ because $g$ is known explicitly. With $g(0) = \mathbb{1}$, Duhamel's principle (\cref{sec:duhamel}) gives
    \begin{align}
        g(t) = e^{(\Lind_1 + \Lind_2) t} + \int_0^t d\tau_1 \, e^{(\Lind_1 + \Lind_2)(t - \tau_1)} \left[ g(\tau_1), \Lind_2 \right] ,
    \end{align}
    so the splitting error is given by the integral alone. Since $e^{\Lind_2 \tau_1}$ commutes with its own generator the inhomogeneity factorizes, $[ g(\tau_1), \Lind_2 ] = [ e^{\Lind_1 \tau_1}, \Lind_2 ] \, e^{\Lind_2 \tau_1}$, and pulling the integral out of the norm and stripping the two semigroup factors with \cref{eqn:induced_submult,eqn:semigroup_contractive} leaves
    \begin{align}
        \norm{e^{(\Lind_1 + \Lind_2)t} - e^{\Lind_1 t} e^{\Lind_2 t}}_1 \leq \int_0^t d\tau_1 \, \norm{\left[ e^{\Lind_1 \tau_1}, \Lind_2 \right]}_1 . \label{eqn:splitting_single_int}
    \end{align}

    Setting $h(t) := [ e^{\Lind_1 t}, \Lind_2 ]$ and using that $\Lind_1$ commutes with $e^{\Lind_1 t}$,
    \begin{align}
        h'(t) = \Lind_1 e^{\Lind_1 t} \Lind_2 - \Lind_2 \Lind_1 e^{\Lind_1 t} = \Lind_1 h(t) + \left[ \Lind_1, \Lind_2 \right] e^{\Lind_1 t} ,
    \end{align}
    now with $h(0) = [ \mathbb{1}, \Lind_2 ] = 0$, so the homogeneous term drops out entirely and
    \begin{align}
        h(\tau_1) = \int_0^{\tau_1} d\tau_2 \, e^{\Lind_1 (\tau_1 - \tau_2)} \left[ \Lind_1, \Lind_2 \right] e^{\Lind_1 \tau_2} .
    \end{align}
    Stripping the two exponentials as before gives $\norm{h(\tau_1)}_1 \leq \tau_1 \norm{[ \Lind_1, \Lind_2 ]}_1$, and inserting this into \cref{eqn:splitting_single_int} yields the first inequality of the proposition,
    \begin{align}
        \norm{e^{(\Lind_1 + \Lind_2)t} - e^{\Lind_1 t} e^{\Lind_2 t}}_1 \leq \int_0^t \!\! d\tau_1 \int_0^{\tau_1} \!\! d\tau_2 \, \norm{\left[ \Lind_1, \Lind_2 \right]}_1 = \frac{t^2}{2} \norm{\left[ \Lind_1, \Lind_2 \right]}_1 .
    \end{align}

    We now bound the commutator: By the triangle inequality and \cref{eqn:induced_submult} it suffices to bound the generators separately, $\norm{[ \Lind_1, \Lind_2 ]}_1 \leq 2 \norm{\Lind_1}_1 \norm{\Lind_2}_1$. Using the two-sided submultiplicativity of the trace norm (\cref{sec:trace_norm}),
    \begin{align}
        \tnorm{\Lind_H[\rho]} &= \tnorm{-i \left[ H, \rho \right]} \leq 2 \norm{H} \tnorm{\rho} , \\
        \tnorm{\Lind_K[\rho]} &= \tnorm{K \rho K^\dagger - \tfrac{1}{2} K^\dagger K \rho - \tfrac{1}{2} \rho K^\dagger K} \leq \left( 1 + \tfrac{1}{2} + \tfrac{1}{2} \right) \norm{K}^2 \tnorm{\rho} ,
    \end{align}
    so $\norm{\Lind_H}_1 \leq 2\norm{H}$ and $\norm{\Lind_K}_1 \leq 2\norm{K}^2$, giving
    \begin{align}
        \norm{\left[ \Lind_H, \Lind_K \right]}_1 \leq 8 \norm{H} \norm{K}^2 \label{eqn:lind_commutator_bound}
    \end{align}
    and hence the second inequality. The third follows from $\norm{K} \leq \norm{A} F_\text{tot}$, which follows from the Poisson identity \cref{eqn:poisson} and the unitary invariance of the spectral norm.
\end{proof}

\subsection{Dilation error}
\label{sec:dilation_error}

Replacing the split Lindblad dynamics by a single dilation step is accurate to second order in the step duration, with the following prefactor.

\begin{proposition}
\label{prop:dilation_error}
    Let $\tilde{\rho}_1(t)$ be the state produced by one dilation step with a single transition operator $K$. Then for every $t \geq 0$
    \begin{align}
        \norm{e^{\mathcal{L}_H t} e^{\mathcal{L}_K t} \rho(0) - \tilde{\rho}_1(t)}_1 \leq \frac{8}{3} t^2 \norm{K}^4 \leq t^2 C_D ,
    \end{align}
    with the dilation constant
    \begin{align}
        C_D := \frac{8}{3} \norm{A}^4 F_\text{tot}^4 . \label{eqn:C_D_def}
    \end{align}
\end{proposition}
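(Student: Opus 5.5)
The plan is to strip off the coherent part first and then compare two second-order expansions in $t$ whose remainders are controlled \emph{exactly}, with no neglected higher-order terms. Both $e^{\Lind_H t}$ and the final conjugation by $e^{-iHt}$ in $\tilde{\rho}_1(t)$ are the same unitary conjugation. By unitary invariance of the trace norm (\cref{sec:trace_norm}) it therefore suffices to show
\begin{align}
\tnorm{e^{\Lind_K t}[\rho] - \Phi(t)[\rho]} \leq \tfrac{8}{3} t^2 \norm{K}^4 \tnorm{\rho}.
\end{align}
The last inequality of the proposition then follows exactly as in \cref{prop:splitting_error}, from $\norm{K}\leq \norm{A}F_\text{tot}$.

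\textbf{Lindblad side.} Apply the integral formula of \cref{sec:duhamel} with $n=2$. Together with $\norm{\Lind_K}_1\leq 2\norm{K}^2$ from the proof of \cref{prop:splitting_error}, this gives
\begin{align}
e^{\Lind_K t}[\rho] = \rho + t\Lind_K[\rho] + R_1, \qquad \tnorm{R_1}\leq \tfrac{t^2}{2}\norm{\Lind_K}_1^2\tnorm{\rho} \leq 2t^2\norm{K}^4\tnorm{\rho}.
\end{align}

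\textbf{Dilation side.} Set $s=\sqrt{t}$ and compute $\Phi(t)$ in closed form. Since $\tilde{K}^2=\mathrm{diag}(K^\dagger K, KK^\dagger)$, the first block column of $e^{-is\tilde{K}}$ is $(C, S)^T$ with
\begin{align}
C=\cos\bigl(s\sqrt{K^\dagger K}\bigr), \qquad S=-iK\,g, \qquad g:=\frac{\sin(s\sqrt{K^\dagger K})}{\sqrt{K^\dagger K}}.
\end{align}
Hence $\Phi(t)[\rho]=C\rho C + S\rho S^\dagger$. I will avoid Taylor-truncating with loose tails and instead write $C=I-X$ with $X=1-\cos(s\sqrt{K^\dagger K})$. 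Elementary scalar inequalities applied via functional calculus, with $x^2:=s^2\norm{K}^2=t\norm{K}^2$, give
\begin{align}
\norm{X}\leq x^2/2, \qquad \norm{X-\tfrac{t}{2}K^\dagger K}\leq x^4/24, \qquad \norm{g}\leq s, \qquad \norm{g-s}\leq s x^2/6.
\end{align}
Then $C\rho C=\rho - X\rho-\rho X + X\rho X$ and $S\rho S^\dagger = K g\rho g K^\dagger$. The zeroth- and first-order parts reproduce exactly $\rho + t\Lind_K[\rho]$, namely $-\tfrac{t}{2}\{K^\dagger K,\rho\} + tK\rho K^\dagger$. The remainders are bounded as follows:
\begin{enumerate}
\item $(X-\tfrac t2K^\dagger K)\rho + \rho(X-\tfrac t2 K^\dagger K)$ contributes at most $\tfrac{2}{24}t^2\norm{K}^4$.
\item $X\rho X$ contributes at most $\tfrac14 t^2\norm{K}^4$.
\item Writing $g\rho g - s^2\rho=(g-s)\rho g + s\rho(g-s)$ and using Hölder's inequality \cref{eqn:submultiplicativity}, $K(g\rho g-s^2\rho)K^\dagger$ contributes at most $\norm{K}^2\cdot 2s^2x^2/6=\tfrac13 t^2\norm{K}^4$.
\end{enumerate}

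\textbf{Combining.} The triangle inequality gives a total of $(2+\tfrac1{12}+\tfrac14+\tfrac13)t^2\norm{K}^4=\tfrac83 t^2\norm{K}^4$, exactly the claimed constant.

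\textbf{Main obstacle.} The delicate step is making the dilation-side remainder exact rather than asymptotic. A naive power-series expansion of $e^{-is\tilde{K}}$ produces $\mathcal{O}(t^3)$ cross terms that would spoil a clean $t^2$ bound. The factorization $C=I-X$ and the splitting of $g\rho g - s^2\rho$ are chosen precisely so that each remainder is a product of at most two factors with sharp scalar bounds.
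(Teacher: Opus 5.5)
Your proof is correct and gives exactly the paper's constant, but the dilation half takes a different route. Your Lindblad half is identical to the paper's: the order-two integral remainder with $\norm{\Lind_K}_1\leq 2\norm{K}^2$ gives $2t^2\norm{K}^4$.

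For the dilation half, the paper never computes the channel. It expands the conjugation superoperator $e^{-i\sqrt{t}\,\mathrm{ad}_{\tilde K}}$ to fourth order in $\sqrt{t}$, using the integral remainder from the Duhamel appendix. Odd orders are block off-diagonal and vanish under $\Tr_a$, and the second-order term is $t\Lind_K[\rho]$. The remainder is then bounded by $\tfrac{t^2}{4!}(2\norm{\tilde K})^4=\tfrac23 t^2\norm{K}^4$. You instead write the channel in closed Kraus form $C\rho C+S\rho S^\dagger$ via functional calculus of $K^\dagger K$. You then bound three explicit remainders with scalar Taylor inequalities, and your pieces $\tfrac1{12}+\tfrac14+\tfrac13$ sum to the same $\tfrac23$.

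The paper's route is shorter and more general. It uses only the off-diagonal parity of $\tilde K$ and $\norm{\tilde K}=\norm{K}$, so it transfers unchanged to convex mixtures of jump operators (as remarked after the proof) or to higher-order expansions. Your route is more elementary and is tied to the single-ancilla $2\times 2$ block structure. In exchange, it exposes the channel explicitly and leaves room for tightening individual remainders. For instance, $K\lvert\psi_0\rangle=0$ immediately gives $C\lvert\psi_0\rangle=\lvert\psi_0\rangle$ and $S\lvert\psi_0\rangle=0$, which is the fixed-point property invoked in the fixed-point-error appendix.

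Two small remarks. First, the ``main obstacle'' you describe only arises if you expand $e^{-is\tilde K}$ and $e^{is\tilde K}$ separately. Expanding the superoperator with an integral remainder, as the paper does, already avoids cross terms without any closed form. Second, $g$ should be defined as the entire function $\sum_{n\geq 0}(-1)^n s^{2n+1}u^n/(2n+1)!$ evaluated at $u=K^\dagger K$. This makes it well defined, and equal to $s$, on $\ker K$.
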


\begin{proof}
    Both the exact and the dilated evolution carry the same factor $e^{\Lind_H t}$, which is a conjugation by a unitary and therefore leaves the trace norm invariant, so it suffices to compare the dissipative parts,
    \begin{align}
        \norm{e^{\mathcal{L}_H t} e^{\mathcal{L}_K t} \rho(0) - \tilde{\rho}_1(t)}_1 = \norm{ e^{\mathcal{L}_K t}\rho - \Tr_a \left[ e^{-i \sqrt{t} \tilde{K}} \ket{0} \bra{0} \otimes \rho \, e^{i \sqrt{t} \tilde{K}}  \right]}_1 .
    \end{align}
    Both terms agree with $(\mathbb{1} + t \Lind_K)\rho$ to second order in $t$, so we insert this common expansion and split the error with the triangle inequality,
    \begin{align}
        \norm{e^{\mathcal{L}_H t} e^{\mathcal{L}_K t} \rho(0) - \tilde{\rho}_1(t)}_1 \leq \underbrace{\tnorm{e^{\Lind_K t}\rho - \left( \mathbb{1} + t \Lind_K \right)\rho}}_{(1)} + \underbrace{\tnorm{\Tr_a \left[ e^{-i \sqrt{t} \ad_{\tilde{K}}} \ket{0} \bra{0} \otimes \rho \right] - \left( \mathbb{1} + t \Lind_K \right)\rho}}_{(2)} ,
        \label{eqn:dilation_error_parts}
    \end{align}
    where $\ad_{\tilde{K}}(X) := [\tilde{K},X]$, so that $e^{-i s\, \ad_{\tilde{K}}}(X) = e^{-is \, \tilde{K}} X e^{is \, \tilde{K}}$.

    Term $(1)$ is the integral identity in \cref{sec:duhamel} at order $n=2$,
    \begin{align}
        (1) \leq \frac{t^2}{2} \tnorm{\Lind_K^2[\rho]} \leq \frac{t^2}{2} \norm{\Lind_K}_1^2 \leq 2 t^2 \norm{K}^4 ,
    \end{align}
    using $\norm{\Lind_K}_1 \leq 2\norm{K}^2$ from the proof of \cref{prop:splitting_error}.

    For term $(2)$ we expand the dilated evolution in $\sqrt{t}$ instead. Since $\tilde{K}$ is block off-diagonal in the ancilla while $\ket{0}\bra{0} \otimes \rho$ is block diagonal, $\ad_{\tilde{K}}$ flips this parity, so all odd orders are block off-diagonal and are annihilated by $\Tr_a$. The surviving second-order term is exactly $t \Lind_K[\rho]$, which is what makes the dilation reproduce the Lindbladian, and the first non-vanishing correction is therefore of fourth order in $\sqrt{t}$. Applying \cref{sec:duhamel} at order $n=4$ in $\sqrt{t}$, and using that the partial trace is trace-norm non-increasing,
    \begin{align}
        (2) \leq \frac{(\sqrt{t})^4}{4!} \tnorm{\ad_{\tilde{K}}^4 \left[ \ket{0} \bra{0} \otimes \rho \right]} \leq \frac{t^2}{24} \, 2^4 \norm{\tilde{K}}^4 = \frac{2}{3} t^2 \norm{K}^4 , \label{eqn:dilation_error_term_2}
    \end{align}
    where each of the four nested commutators contributes a factor $2\norm{\tilde{K}}$, and $\norm{\tilde{K}} = \norm{K}$ because $\tilde{K}^2 = \mathrm{diag}(K^\dagger K, K K^\dagger)$.

    Adding the two contributions gives the first inequality, and the second follows by substituting $\norm{K} \leq \norm{A} F_\text{tot}$ as in \cref{prop:splitting_error}.
\end{proof}

In the case of multiple jump operators, $\mathcal{L}_K = \sum_k p_k \mathcal{L}_{K,k}$, we simulate these by classically sampling the corresponding dilation channel, $\bar{\Phi} = \sum_k p_k \Phi^\text{dil}_k$. 
Since $1+t\mathcal{L}_K=\sum_k p_k\left(1+t\mathcal{L}_{K,k}\right)$, the proof above applies unchanged.
Term (1) of \cref{eqn:dilation_error_parts} uses only
$\norm{\mathcal{L}_K}_1\le\sum_k p_k\norm{\mathcal{L}_{K,k}}_1\le 2\max_k\norm{K_k}^2$. And term (2) becomes a convex combination of the single-operator bounds \cref{eqn:dilation_error_term_2}. Hence also no additional error arises beyond replacing $\norm{K} \to \max_k \norm{K_k}$.
Since the sum of the channels and the Lindblad evolution agree to first order, and because the dilation step is a second-order approximation anyway, this introduces no additional trace-norm error besides replacing $\norm{K} \to \max_k \norm{K_k}$.

\subsection{Quadrature error}
\label{sec:quad_error}

Approximating the time integral in the transition operator by the finite sum \cref{eqn:K_S_def} replaces $\tilde{K}$ by $\tilde{K}_S$, we need to bound the difference
\begin{align}
    \norm{\tilde{\rho}_{1} - \tilde{\rho}_{1,s}}_1  &= \norm{ \Tr_a \left( e^{-i \sqrt{\tau} \tilde{K}} \ket{0} \bra{0} \otimes \rho(0) \, e^{i \sqrt{\tau} \tilde{K}} \right) - \Tr_a \left( e^{-i \sqrt{\tau} \tilde{K}_S} \ket{0} \bra{0} \otimes \rho(0) \, e^{i \sqrt{\tau} \tilde{K}_S} \right) } .
\end{align}
This error is controlled by the mass of the filter function outside the sampled time window,
\begin{align}
    F(M_S) := \sum_{|l| \geq M_S} w_l |f(l \tau_S)| ,
\end{align}
with $w_l = \tau_S$, $w_{\pm M_S} = \tau_S/2$. $F(M_S)$, and therefore the error, decays as an exponential in $\sqrt{M_S}$, because $\hat{f}$ is smooth at the edges of its support.

\begin{proposition}
\label{prop:filter_tail}
    Let $\hat{f}$ be the bump filter \cref{eqn:filter_def} with $0 < a \leq 9$, in the convention $f(s) = \frac{1}{2\pi} \int_{-\infty}^{\infty} \hat{f}(\omega) e^{-i \omega s} d\omega$. Then the single-step quadrature error is bounded by
    \begin{align}
        \norm{\tilde{\rho}_{1} - \tilde{\rho}_{1,s}}_1 \leq 4 \tau \norm{A}^2 F_\text{tot} F(M_S) , \label{eqn:quadrature_error}
    \end{align}
    the filter itself obeys the pointwise bound
    \begin{align}
        \left| f(s) \right| \leq \frac{e \, S_\omega}{2\pi} \exp\left( -\frac{1}{6} \sqrt{a S_\omega |s|} \right) \label{eqn:filter_pointwise}
    \end{align}
    for all $s \in \mathbb{R}$. Consequently $F(M_S)$, and with it the quadrature error, is suppressed as $\exp(-\sqrt{a \pi (M_S-1/2)}/6)$.
\end{proposition}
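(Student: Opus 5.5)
We prove the three claims of \cref{prop:filter_tail} in turn: the trace-norm bound \cref{eqn:quadrature_error}, then the pointwise decay \cref{eqn:filter_pointwise}, and finally the tail estimate for $F(M_S)$.

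\textbf{Step 1: reduce the trace-norm difference to $\norm{K - K_S}$.} We expand both dilation channels in $\sqrt{\tau}$, exactly as in the proof of \cref{prop:dilation_error}. The odd orders are block off-diagonal and vanish under $\Tr_a$.

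Rather than truncating, we use the Duhamel identity of \cref{sec:duhamel}. Write $U(s)=e^{-is\tilde K}$ and $U_S(s)=e^{-is\tilde K_S}$. Then
\begin{align}
U(\sqrt\tau)XU(\sqrt\tau)^\dagger-U_S(\sqrt\tau)XU_S(\sqrt\tau)^\dagger=\int_0^{\sqrt\tau}ds\,U(\sqrt\tau-s)\bigl(-i[\tilde K-\tilde K_S,\,U_S(s)XU_S(s)^\dagger]\bigr)U(\sqrt\tau-s)^\dagger .
\end{align}
This identity alone only gives a bound of order $2\sqrt\tau\norm{\tilde K-\tilde K_S}$. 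To obtain the factor $\tau$, we exploit the parity structure. Apply Duhamel once more to the inner block-diagonal/off-diagonal decomposition, or equivalently expand to second order and control the remainder. The leading surviving term is then
\begin{align}
\tau\bigl(\Lind_K-\Lind_{K_S}\bigr)[\rho] .
\end{align}
Its trace norm is at most $\tau\cdot 2(\norm K+\norm{K_S})\norm{K-K_S}$. The higher-order corrections can be absorbed in the same form, again via parity: every surviving term contains two factors of $\tilde K$ or $\tilde K_S$, at least one of which is replaced by the difference.

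Using $\norm{K},\norm{K_S}\le\norm A F_\text{tot}$ gives the prefactor $4\tau\norm A^2F_\text{tot}\,\norm{K-K_S}/\norm A$.

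\textbf{Step 2: bound $\norm{K-K_S}$ by the filter tail.} By the Poisson-summation identity \cref{eqn:poisson}, $K$ equals the sampled sum over all $l\in\mathbb Z$ with spacing $\tau_S$. This is where the Nyquist choice $\tau_S=\pi/S_\omega$ enters: aliases of $\hat f$ cannot overlap transition frequencies within the window. Consequently $K-K_S$ is the sum over $|l|\ge M_S$, with the half-weights at $l=\pm M_S$ matching the definition of $F(M_S)$. Unitary invariance then gives $\norm{K-K_S}\le\norm A F(M_S)$.

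The delicate point here is the boundary weights. We must check that the trapezoidal half-weight at $\pm M_S$ combines with the tail sum to exactly $w_l$ as defined. It does.

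\textbf{Step 3: pointwise bound on $f$.} This is the main obstacle. The standard route is a saddle-point/contour argument for bump functions, but we would instead use a Gevrey-type derivative bound.

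First, integrate by parts $n$ times: $|f(s)|\le\frac{1}{2\pi}|s|^{-n}\norm{\hat f^{(n)}}_{L^1}$.

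Second, substitute $x=\frac{2}{S_\omega}\omega+1$. The bump $e^{-a/(1-x^2)}$ has $n$-th derivative bounded by $C^n n^{2n}a^{-n}$-type Gevrey-2 estimates. We would derive these via the Cauchy integral formula on a disk of radius $\propto(1-x^2)^2/a$. Around each point the function is bounded, since $\mathrm{Re}\,\frac{1}{1-z^2}$ stays positive on a suitable disk. This gives
\begin{align}
\norm{\hat f^{(n)}}_{L^1}\lesssim S_\omega\,(2/S_\omega)^n\, n!\,(c\,n/a)^{n}\,e^{\text{const}} ,
\end{align}
where the constant in the exponent yields the factor $e$.

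Third, optimize over $n\approx\sqrt{aS_\omega|s|}/6$. This gives $\exp(-\sqrt{aS_\omega|s|}/6)$. The restriction $a\le9$ is used to keep the optimal $n\ge1$, or to absorb constants.

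I expect tracking the constants $1/6$ and $e$ to require care. If the Cauchy estimates prove too lossy, I would fall back on the explicit steepest-descent bound for $\int e^{-a/(1-x^2)-i\omega s}dx$.

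\textbf{Step 4: the tail of $F(M_S)$.} Summing the pointwise bound over $|l|\ge M_S$ with $s_l=l\pi/S_\omega$ gives terms of the form $\exp(-\sqrt{a\pi l}/6)$. Compare the sum with an integral starting at $M_S-1/2$, using monotonicity and the half weight. The integral $\int_{x_0}^\infty e^{-c\sqrt x}dx=\frac{2}{c^2}(1+c\sqrt{x_0})e^{-c\sqrt{x_0}}$ is evaluated in closed form. This yields the stated suppression $\exp(-\sqrt{a\pi(M_S-1/2)}/6)$, up to polynomial prefactors.
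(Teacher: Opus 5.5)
Your Steps 2 and 4 follow the paper: Poisson summation at the Nyquist spacing, then comparing the weighted tail with the envelope integral from $S_S-\tau_S/2$. Steps 1 and 3, however, each contain a genuine gap.

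\textbf{Step 3.} A Cauchy disk of radius $r\propto(1-x^2)^2/a$ with an $n$-independent constant is too small to give the Gevrey-2 bound you write down. With $\delta=1-x$, such a disk yields at best $|g^{(n)}(x)|\lesssim n!\,(a/\delta^2)^n e^{-a/(2\delta)}$. Its supremum over $\delta$, attained near $\delta\approx a/(4n)$, is $n!\,(Cn^2/a)^n$, which is of order $(n!)^3$ rather than $(n!)^2$. Fed into integration by parts, this only proves $|f(s)|\lesssim\exp(-c\,(aS_\omega|s|)^{1/3})$. That is not square-root decay, let alone the constant $1/6$.

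Moreover, mere positivity of $\operatorname{Re}(1-z^2)^{-1}$ only gives $|G|\le 1$ for $G(z)=e^{-a/(1-z^2)}$. That cannot compensate $r^{-n}$ as $x\to\pm1$.

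The paper instead takes the radius proportional to the distance to the singularity, $|z-x|=\delta/2$. There it proves the quantitative bound $\operatorname{Re}(1-z^2)^{-1}\ge 1/(9\delta)$, via $\operatorname{Re}(1-z)^{-1}\ge 2/(9\delta)$ and $\operatorname{Re}(1+z)^{-1}>0$. This gives $|g^{(n)}(x)|\le n!\,(2/\delta)^n e^{-a/(9\delta)}$. Maximizing at $\delta=a/(9n)$ (admissible since $a\le 9$) gives $\sup|g^{(n)}|\le n!\,(18n/(ea))^n$. Then $n=\lfloor\sqrt{a|\xi|/18}\rfloor$ together with $n!\le n^n$ gives $|\hat g(\xi)|\le 2e^{1-\sqrt{a|\xi|/18}}$, which is exactly the pointwise bound.

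You also need to treat $aS_\omega|s|<36$ separately, since there the chosen $n$ would be zero. The paper uses $|f(s)|\le S_\omega/(2\pi)$, which lies below the envelope in that range.

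\textbf{Step 1.} Expanding to second order does not give the stated constant. The leading term $\tau(\Lind_K-\Lind_{K_S})[\rho]$ is already bounded by the full $2\tau\norm{K-K_S}(\norm{K}+\norm{K_S})$. So the $O(\tau^2\norm{K}^3\norm{K-K_S})$ remainder cannot be ``absorbed in the same form'', and you would obtain the bound only to leading order in $\tau$.

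What is needed is a non-perturbative version of your ``apply Duhamel once more''. Inside the single Duhamel integrand, add and subtract $\tilde\rho(0)$ in the commutator and use $UCU^\dagger-C=(U-\mathbbm{1})CU^\dagger+C(U^\dagger-\mathbbm{1})$. The term with both unitaries replaced by $\mathbbm{1}$ is $\Tr_a[\tilde K-\tilde K_S,\tilde\rho(0)]=0$ by parity. Every remaining term carries an exactly bounded factor: either $\tnorm{\tilde\rho_{1,a,s}(t')-\tilde\rho(0)}\le 2t'\norm{\tilde K_S}$ or $\norm{e^{\mp i\tilde K u}-\mathbbm{1}}\le u\norm{\tilde K}$. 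Integrating over $t'\in[0,\sqrt\tau]$ gives $2\tau\norm{K-K_S}(\norm{K}+\norm{K_S})$ for every $\tau$, with no remainder.
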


The quadrature enters the dilation step only through the transition operator, so the first task is to convert the difference between the densities into a difference between the transition operators, $\norm{K_S - K}$, by extending the proof in \cite{ding2024single} to include the prefactors:
\begin{proposition}
\label{lem:quad_operator}
    Let $\tilde{\rho}_1$ and $\tilde{\rho}_{1,s}$ be the states produced by a single dilation step of duration $\tau$ with the transition operators $K$ and $K_S$ respectively. Then
    \begin{align}
        \norm{\tilde{\rho}_{1} - \tilde{\rho}_{1,s}}_1 \leq 2 \tau \norm{K - K_S} \left( \norm{K} + \norm{K_S} \right) . \label{eqn:quad_operator}
    \end{align}
\end{proposition}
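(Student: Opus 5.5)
The plan is to write the dilation channel explicitly in Kraus form and to compare the Kraus operators block by block. Duhamel's principle (\cref{sec:duhamel}) will supply the needed operator bounds.

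\textbf{Kraus form.} Set $s=\sqrt{\tau}$, $U(s)=e^{-is\tilde{K}}$ and $U'(s)=e^{-is\tilde{K}_S}$. Because the ancilla starts in $\ket{0}$, only the first block column of $U$ matters. Write $C(s):=\bra{0}U(s)\ket{0}$ and $B(s):=\bra{1}U(s)\ket{0}$, and define $C',B'$ analogously from $U'$. Then
\begin{align}
\tilde{\rho}_1 = C\rho C^\dagger + B\rho B^\dagger , \qquad \tilde{\rho}_{1,s} = C'\rho C'^\dagger + B'\rho B'^\dagger ,
\end{align}
where the factor $e^{-iH\tau}$ is common to both and drops out by unitary invariance. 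The difference of each pair satisfies
\begin{align}
\tnorm{X\rho X^\dagger - Y\rho Y^\dagger} \le \norm{X-Y}\,(\norm{X}+\norm{Y}) ,
\end{align}
by the two-sided Hölder inequality \cref{eqn:submultiplicativity}, applied after writing $X\rho X^\dagger - Y\rho Y^\dagger = (X-Y)\rho X^\dagger + Y\rho(X-Y)^\dagger$. It therefore suffices to bound the norms of the blocks and the norms of their differences.

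\textbf{The $B$ (jump) contribution.} First, $\norm{B(u)}\le u\norm{K}$. This follows from Duhamel against the identity: $U(u)-\mathbb{1} = -i\int_0^u U(v)\tilde{K}\,dv$, whose off-diagonal block has norm at most $u\norm{\tilde K}=u\norm{K}$. The same argument gives $\norm{B'(u)}\le u\norm{K_S}$. Second, by Duhamel,
\begin{align}
U(s)-U'(s) = -i\int_0^s U(s-u)\,(\tilde K-\tilde K_S)\,U'(u)\,du ,
\end{align}
so $\norm{B-B'}\le\norm{U-U'}\le s\norm{K-K_S}$, using $\norm{\tilde K-\tilde K_S}=\norm{K-K_S}$. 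Together these give a jump contribution of $s^2\norm{K-K_S}(\norm{K}+\norm{K_S})$.

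\textbf{The $C$ (no-jump) contribution.} Here the naive estimate only gives order $s$, so the refinement is needed. Take the $(0,0)$ block of the Duhamel integral above. Since $\tilde K-\tilde K_S$ is block off-diagonal, the $(0,0)$ block of the integrand is
\begin{align}
U_{01}(s-u)\,(K-K_S)\,U'_{00}(u) + U_{00}(s-u)\,(K-K_S)^\dagger\,U'_{10}(u) .
\end{align}
Each term contains exactly one off-diagonal block of $U$ or $U'$. By the bound just proved, those blocks are at most $(s-u)\norm{K}$ and $u\norm{K_S}$ respectively, and the diagonal blocks are at most $1$. Integrating over $u\in[0,s]$ gives
\begin{align}
\norm{C-C'} \le \tfrac{s^2}{2}\,(\norm{K}+\norm{K_S})\,\norm{K-K_S} .
\end{align}
With $\norm{C},\norm{C'}\le 1$, the no-jump contribution is at most $s^2(\norm{K}+\norm{K_S})\norm{K-K_S}$.

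Adding the two contributions and setting $s^2=\tau$ gives exactly \cref{eqn:quad_operator}. The main obstacle is the $C$-block estimate, because one must extract a second power of $s$ without incurring $\norm{K}^4$-type terms. The approach is to use the block parity of the integrand, so that every surviving term carries one off-diagonal block that is itself $O(u)$. I will need to check that the off-diagonal block bound holds for both $U$ and $U'$ at all intermediate times.
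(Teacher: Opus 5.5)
Your argument is correct and gives exactly the paper's constant, but it is organized differently.

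The paper works at the level of the dilated state. It applies variation of constants to $\tilde\rho_{1,a,s}$ with inhomogeneity $i[\tilde K-\tilde K_S,\tilde\rho_{1,a,s}]$ and subtracts and adds $\tilde\rho(0)$ inside the commutator. The zeroth-order piece is removed by the parity identity $\Tr_a[\tilde K-\tilde K_S,\tilde\rho(0)]=0$. Three remainders are left. The state drift $\tnorm{\tilde\rho_{1,a,s}(t')-\tilde\rho(0)}\le 2t'\norm{K_S}$ gives $2\tau\norm{K-K_S}\norm{K_S}$. The two propagator factors $\norm{e^{\mp i\tilde K u}-\mathbbm{1}}\le u\norm{K}$ together give $2\tau\norm{K-K_S}\norm{K}$.

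You instead apply Duhamel to the unitaries $U-U'$ and compare Kraus operators. The state then enters only through $\tnorm{X\rho X^\dagger-Y\rho Y^\dagger}\le\norm{X-Y}(\norm{X}+\norm{Y})$. The same parity mechanism appears as the fact that the jump operators $B,B'$ are $O(\sqrt{\tau})$, while the no-jump operators differ only at $O(\tau)$. The jump and no-jump parts then each contribute $\tau\norm{K-K_S}(\norm{K}+\norm{K_S})$.

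Your version is arguably more transparent. It avoids the add-and-subtract step and the identity $UCU^\dagger-C=(U-\mathbbm{1})CU^\dagger+C(U^\dagger-\mathbbm{1})$. It also makes explicit why the extra power of $\sqrt{\tau}$ appears: every insertion of the off-diagonal $\tilde K-\tilde K_S$ in the $(0,0)$ block is paired with an off-diagonal block of $U$ or $U'$. The paper's version follows the same state-level Duhamel template used for the splitting, dilation and Trotter bounds, which keeps the appendix uniform.

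The check you flag at the end is immediate. The bound $\norm{e^{-iu\tilde K}-\mathbbm{1}}\le u\norm{\tilde K}$ holds for every $u\ge 0$, and likewise for $\tilde K_S$. Any block of an operator is bounded by its full operator norm.
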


\begin{proof}
    Conjugation by $e^{-iH\tau}$ is unitary and therefore leaves the trace norm invariant, so writing $\tilde{\rho}(0) := \ket{0}\bra{0} \otimes \rho(0)$ for the initial dilated state and
    \begin{align}
        \tilde{\rho}_{1,a,s}(t) := e^{-i t \tilde{K}_S} \, \tilde{\rho}(0) \, e^{i t \tilde{K}_S}
    \end{align}
    for the quadrature evolution before the ancilla is traced out, we have
    \begin{align}
        \norm{\tilde{\rho}_{1} - \tilde{\rho}_{1,s}}_1 = \norm{\Tr_a \left[ e^{-i \sqrt{\tau} \tilde{K}} \, \tilde{\rho}(0) \, e^{i \sqrt{\tau} \tilde{K}} - \tilde{\rho}_{1,a,s}(\sqrt{\tau}) \right]}_1 .
    \end{align}
    The quadrature evolution obeys $\partial_t \tilde{\rho}_{1,a,s} = -i [ \tilde{K}_S, \tilde{\rho}_{1,a,s} ]$, which we recast as the equation of motion generated by $\tilde{K}$ together with an inhomogeneity,
    \begin{align}
        \partial_t \tilde{\rho}_{1,a,s} = -i \left[ \tilde{K}, \tilde{\rho}_{1,a,s} \right] + i \left[ \tilde{K} - \tilde{K}_S, \tilde{\rho}_{1,a,s} \right] .
    \end{align}
    Both evolutions start from $\tilde{\rho}(0)$, so variation of constants (\cref{sec:duhamel}) yields the exact representation
    \begin{align}
        \norm{\tilde{\rho}_{1} - \tilde{\rho}_{1,s}}_1 = \norm{\Tr_a \left[ \int_0^{\sqrt{\tau}} \! dt' \; e^{-i\tilde{K}(\sqrt{\tau}-t')} \left[ \tilde{K} - \tilde{K}_S, \, \tilde{\rho}_{1,a,s}(t') \right] e^{i\tilde{K}(\sqrt{\tau}-t')} \right]}_1 . \label{eqn:quad_voc}
    \end{align}
    Estimating \cref{eqn:quad_voc} term by term would bound the integrand by a constant and give only $\norm{\tilde{\rho}_{1} - \tilde{\rho}_{1,s}}_1 = O(\sqrt{\tau})$. To exhibit the cancellation we subtract and add the initial state inside the commutator,
    \begin{align}
        \left[ \tilde{K} - \tilde{K}_S, \tilde{\rho}_{1,a,s}(t') \right] = \left[ \tilde{K} - \tilde{K}_S, \, \tilde{\rho}_{1,a,s}(t') - \tilde{\rho}_{1,a,s}(0) \right] + \left[ \tilde{K} - \tilde{K}_S, \, \tilde{\rho}_{1,a,s}(0) \right] ,
    \end{align}
    and in the second term we use the identity $UCU^\dagger-C = (U-\mathbbm{1})C U^\dagger + C(U^\dagger-\mathbbm{1})$. The resulting contribution in which both exponentials are replaced by the identity vanishes: $\tilde{K} - \tilde{K}_S$ is block off-diagonal in the ancilla while $\tilde{\rho}_{1,a,s}(0)$ is block diagonal, so their commutator is block off-diagonal and hence
    \begin{align}
        \Tr_a \left[ \tilde{K} - \tilde{K}_S, \, \tilde{\rho}_{1,a,s}(0) \right] = 0 . \label{eqn:quad_parity}
    \end{align}
    Every surviving contribution therefore carries either the factor $\tilde{\rho}_{1,a,s}(t') - \tilde{\rho}_{1,a,s}(0)$ or a factor $e^{\mp i\tilde{K}(\sqrt{\tau}-t')} - \mathbbm{1}$, each of which supplies the second power of $\sqrt{\tau}$. Using that the partial trace does not increase the trace norm, that $\tnorm{\tilde{\rho}_{1,a,s}} = 1$, and that $\tnorm{[X,Y]} \leq 2\norm{X}\tnorm{Y}$, we obtain
    \begin{align}
        \norm{\tilde{\rho}_{1} - \tilde{\rho}_{1,s}}_1 \leq \;
        & 2 \int_0^{\sqrt{\tau}} \! dt' \, \norm{\tilde{K} - \tilde{K}_S} \, \tnorm{\tilde{\rho}_{1,a,s}(t') - \tilde{\rho}_{1,a,s}(0)} \notag \\
        + \; & 2 \int_0^{\sqrt{\tau}} \! dt' \, \norm{\tilde{K} - \tilde{K}_S} \, \norm{e^{-i\tilde{K}(\sqrt{\tau}-t')} - \mathbbm{1}} \notag \\
        + \; & 2 \int_0^{\sqrt{\tau}} \! dt' \, \norm{\tilde{K} - \tilde{K}_S} \, \norm{e^{i\tilde{K}(\sqrt{\tau}-t')} - \mathbbm{1}} . \label{eqn:quad_three}
    \end{align}
    Both remaining factors are estimated by the same argument applied once more. Differentiating $\tilde{\rho}_{1,a,s}$ and integrating gives
    \begin{align}
        \tilde{\rho}_{1,a,s}(t') - \tilde{\rho}_{1,a,s}(0) = -i \int_0^{t'} \! d\tau_2 \; e^{-i \tau_2 \tilde{K}_S} \left[ \tilde{K}_S, \tilde{\rho}(0) \right] e^{i \tau_2 \tilde{K}_S} ,
    \end{align}
    so that $\tnorm{\tilde{\rho}_{1,a,s}(t') - \tilde{\rho}_{1,a,s}(0)} \leq 2 t' \norm{\tilde{K}_S}$, while $g(u) := e^{-i \tilde{K} u} - \mathbbm{1}$ satisfies $g'(u) = -i\tilde{K}\left( g(u) + \mathbbm{1} \right)$ with $g(0) = 0$, whence
    \begin{align}
        g(u) = -i \int_0^u \! d\tau_2 \; e^{-i\tilde{K}(u - \tau_2)} \tilde{K}
        \qquad \Longrightarrow \qquad \norm{g(u)} \leq u \norm{\tilde{K}} ,
    \end{align}
    evaluated at $u = \sqrt{\tau} - t'$. Inserting both estimates into \cref{eqn:quad_three} and using $\int_0^{\sqrt{\tau}} t' \, dt' = \int_0^{\sqrt{\tau}} (\sqrt{\tau} - t') \, dt' = \tau/2$ gives
    \begin{align}
        \norm{\tilde{\rho}_{1} - \tilde{\rho}_{1,s}}_1 \leq 2 \tau \norm{\tilde{K} - \tilde{K}_S} \left( \norm{\tilde{K}} + \norm{\tilde{K}_S} \right) ,
    \end{align}
    and \cref{eqn:quad_operator} follows because the dilation preserves the spectral norm, $\norm{\tilde{K}} = \norm{K}$.
\end{proof}

\begin{proof}[Proof of \cref{prop:filter_tail}]
    By \cref{lem:quad_operator} it remains to bound $\norm{K - K_S}$, and this is where the tail integral $F(M_S)$ enters. 
In the eigenbasis of $H$ each matrix element of the integrand is
$f(s)e^{i\Delta_{ij}s}A_{ij}$ with $\Delta_{ij}=E_i-E_j$, whose Fourier
transform is $\nu\mapsto\hat f(\nu+\Delta_{ij})$. With $s_l=l\tau_S$,
Poisson summation then gives
\begin{equation}
\tau_S\sum_{l\in\mathbb Z} f(s_l)\,e^{i\Delta_{ij}s_l}
= \sum_{k\in\mathbb Z}\hat f(\Delta_{ij}+2\pi k /\tau_S) = \hat{f}(\Delta_{ij}),
\end{equation}
for $\tau_S \leq \pi/S_\omega$, since $|\Delta_{ij}|\le 2\|H\|\le S_\omega$, so all aliases with $k\neq0$
lie outside $(-S_\omega,0)$, where $\hat f$ vanishes. The infinite sum is
therefore exact,
    \begin{align}
        \tau_S \sum_{l \in \mathbb{Z}} f(s_l) A(s_l) = \int_{-\infty}^{\infty} f(s) A(s) \, ds = K , \label{eqn:poisson}
    \end{align}
    so sampling as such introduces no error for $\tau_S \leq \pi/S_\omega$, and the entire error is the truncation to $|l| \leq M_S$,
\begin{align}
\norm{K - K_S}
= \norm{\left( \sum_{l\in\mathbb Z} \tau_S - \sum_{l=-M_S}^{M_S} w_l \right)\, f(s_l)A(s_l)}
\le \norm{A}\sum_{|l|\ge M_S}w_l\,|f(s_l)| ,
\end{align}
where $w_l=\tau_S$, $w_{\pm M_S} = \tau_S/2$, so $\tau_S - w_l$ vanishes for $|l| < M_S$ and equals $\tau_S/2$ at $l=\pm M_S$ and is equal to $\tau_S$ beyond. Comparing with the decreasing envelope $B$ of \cref{eqn:filter_pointwise},
$\tfrac{\tau_S}{2}B(M_S\tau_S)\le\int_{S_S-\tau_S/2}^{S_S}B$ and
$\tau_S B(l\tau_S)\le\int_{(l-1)\tau_S}^{l\tau_S}B$ for $l>M_S$, hence
\begin{align}
F(M_S) = \sum_{|l|\ge M_S} w_l\,|f(s_l)|
\le \int_{|s|>S_S-\tau_S/2} B(|s|)\, ds ,
\end{align}
   In particular $\norm{K - K_S} \leq \norm{A} F(M_S)$, so combining \cref{lem:quad_operator} with $\norm{K}, \norm{K_S} \leq \norm{A} F_\text{tot}$, which holds for both operators as in \cref{prop:splitting_error}, gives
    \begin{align}
        \norm{\tilde{\rho}_{1} - \tilde{\rho}_{1,s}}_1 \leq 2 \tau \cdot \norm{A} F(M_S) \cdot 2 \norm{A} F_\text{tot} = 4 \tau \norm{A}^2 F_\text{tot} F(M_S) ,
    \end{align}
    which is \cref{eqn:quadrature_error}.

    It remains to prove \cref{eqn:filter_pointwise} and to evaluate the envelope integral. Set $g(x) := e^{-a/(1-x^2)}$ for $|x| < 1$ and $g := 0$ otherwise, so that $g \in C^\infty(\mathbb{R})$ with $\text{supp} \, g = [-1,1]$ and $\hat{f}(\omega) = g(2\omega/S_\omega + 1)$. Substituting $x = 2\omega/S_\omega + 1$ into the definition of $f$ gives
    \begin{align}
        f(s) = \frac{S_\omega}{4\pi} e^{i S_\omega s/2} \, \hat{g}\!\left( \frac{S_\omega s}{2} \right) , \qquad \hat{g}(\xi) := \int_{-1}^{1} g(x) e^{-i \xi x} dx ,
    \end{align}
    so that \cref{eqn:filter_pointwise} is equivalent to $|\hat{g}(\xi)| \leq 2 e \, e^{-\sqrt{a|\xi|/18}}$ for $a|\xi| \geq 18$, which we prove in four steps.

    Since $g$ is smooth with compact support, all boundary terms vanish under repeated integration by parts, so for every $n \geq 1$
    \begin{align}
        \left| \hat{g}(\xi) \right| \leq \frac{\norm{g^{(n)}}_{L^1}}{|\xi|^n} \leq \frac{2 \sup_x |g^{(n)}(x)|}{|\xi|^n} . \label{eqn:filter_ibp}
    \end{align}
    Since $g$ vanishes identically outside $[-1,1]$ and all its derivatives vanish at $\pm 1$, we have $g^{(n)}(x) = 0$ for $|x| \geq 1$, so the supremum in \cref{eqn:filter_ibp} is attained on the open interval $(-1,1)$; since $g$ is even it is enough to bound it for $x \in [0,1)$.

    Next, $G(z) := \exp(-a/(1-z^2))$ is analytic away from $z = \pm 1$. Fix $x \in [0,1)$, set $\delta := 1 - x \in (0,1]$ and consider the circle $|z - x| = \delta/2$, which stays in the domain of analyticity because $|z-1| \geq \delta/2 > 0$ and $|z+1| \geq 1 - \delta/2 > 0$ there. On this circle $\operatorname{Re}(1-z) \geq \delta/2$ and $|1-z| \leq 3\delta/2$, so that
    \begin{align}
        \operatorname{Re} \frac{1}{1-z} = \frac{\operatorname{Re}(1-z)}{|1-z|^2} \geq \frac{\delta/2}{(3\delta/2)^2} = \frac{2}{9\delta} ,
    \end{align}
    while $\operatorname{Re}(1+z) \geq 1 - \delta/2 > 0$ implies $\operatorname{Re}(1+z)^{-1} > 0$. The partial fraction decomposition $(1-z^2)^{-1} = \frac{1}{2}[(1-z)^{-1} + (1+z)^{-1}]$ therefore gives $\operatorname{Re}(1-z^2)^{-1} \geq 1/(9\delta)$, hence $|G(z)| = e^{-a \operatorname{Re}(1-z^2)^{-1}} \leq e^{-a/(9\delta)}$ on the circle, and the Cauchy estimate yields
    \begin{align}
        \left| g^{(n)}(x) \right| \leq n! \left( \frac{2}{\delta} \right)^{n} e^{-a/(9\delta)} .
    \end{align}
    The right-hand side is maximized over $\delta \in (0,1]$ at $\delta = a/(9n)$, which is admissible because $a \leq 9 \leq 9n$, giving
    \begin{align}
        \sup_x \left| g^{(n)}(x) \right| \leq n! \left( \frac{18 \, n}{e \, a} \right)^{n} . \label{eqn:filter_gevrey}
    \end{align}
    Finally, inserting \cref{eqn:filter_gevrey} into \cref{eqn:filter_ibp}, using $n! \leq n^n$ and writing $Q := a|\xi|/18$,
    \begin{align}
        \left| \hat{g}(\xi) \right| \leq 2 \, n! \left( \frac{18 \, n}{e \, a |\xi|} \right)^{n} \leq 2 \left( \frac{n^2}{e \, Q} \right)^{n} .
    \end{align}
    Choosing $n = \lfloor \sqrt{Q} \rfloor$, which is at least one precisely when $a |\xi| \geq 18$, gives $n^2 \leq Q$, so the bracket is bounded by $e^{-1}$ and $|\hat{g}(\xi)| \leq 2 e^{-n}$. Undoing the floor with $n = \lfloor \sqrt{Q} \rfloor > \sqrt{Q} - 1$ then gives $|\hat{g}(\xi)| \leq 2 e^{1 - \sqrt{Q}}$, which is the claim.

    Finally we evaluate the envelope integral. Since $|f|$ is even, with $\beta := \sqrt{a S_\omega}/6$ and the substitution $s = v^2$,
    \begin{align}
        \int_{|s|>S_S-\tau_S/2} B(|s|) \, ds = \frac{e S_\omega}{\pi} \int_{|s|>S_S-\tau_S/2}^{\infty} e^{-\beta \sqrt{s}} ds = \frac{e S_\omega}{\pi} \cdot \frac{2}{\beta^2} \left( 1 + \beta \sqrt{S_S-\tau_S/2} \right) e^{-\beta \sqrt{S_S-\tau_S/2}} ,
    \end{align}
    and inserting $\beta^2 = a S_\omega / 36$ together with $S_\omega S_S = M_S \pi$ gives,
    \begin{align}
        F(M_S) \leq \int_{|s|>S_S-\tau_S/2} B(|s|) \, ds = \frac{72 \, e}{\pi a} \left( 1 + \frac{\sqrt{a \pi (M_S-1/2)}}{6} \right) \exp\left( - \frac{\sqrt{a \pi (M_S-1/2)}}{6} \right) . \label{eqn:filter_tail}
    \end{align}
For $aS_\omega|s|<36$ the bound holds trivially: since
$0\le\hat f\le e^{-a}$ on an interval of length $S_\omega$,
$|f(s)|\le\frac{1}{2\pi}\int\hat f\,d\omega\le S_\omega/2\pi$, while
$\tfrac16\sqrt{aS_\omega|s|}<1$ implies $B(|s|)>S_\omega/2\pi$.
Hence $|f|\le B$ everywhere, and since $B$ is decreasing it holds for all $M_S$.
\end{proof}

Since $F(M_S)$ depends only on $a$ and $M_S$ it is system independent, so in practice we evaluate it numerically, which leads to much tighter bounds. \Cref{prop:filter_tail} only certifies the $e^{-\sqrt{a M_S}}$ scaling quoted in \cref{eqn:eps_quad_scaling}.

\subsection{Quadrature Trotterization error}
\label{sec:quad_trott_error}
\begin{proposition}
\label{prop:quad_trott_error}
    The single-step error due to the Trotterization of the discretized time integral for time step $\tau$ is bounded by
    \begin{align}
        \norm{\tilde{\rho}_{1,s}-\bar{\rho}_1}_1 \leq \tau^2 C_{T,1} + \tau^3C_{T,2} ,
    \end{align}
    with
    \begin{align}
    \begin{split}
    C_{T,1} &:=  \sum_{k=1}^{K}  \left[ \frac{1}{24} \norm{\ad^3_{V_{k}}(v_{k})} + \frac{1}{96} \norm{\ad^3_{v_{k}}( V_{k})} \right. \\
        & \quad \left. + \frac{1}{24} \Big\| \ad^2_{V_k} v_k
        -  \frac{1}{2}\ad^2_{v_k} V_k \Big\|
        \Big( 11 \norm{V_{k+1}} + 3 \norm{V_k} + 3 \norm{v_k} + 8\sum_{k'=k+1}^K \norm{v_{k'}} \Big)\right] ,
    \end{split}\\
        C_{T,2} &:= \sum_{k=1}^{K}  \frac{1}{144} \left( \norm{\ad^2_{V_{k}} (v_{k})} +  \frac{1}{2} \norm{\ad^2_{v_{k}}(V_{k})} \right)^2 .
    \end{align}
    where we used the alternative notation 
    \begin{align}
        v_k :=&  w_{k-1-M_S} (\sigma_x \mathrm{Re} f_{k-1-M_S} +\sigma_y \mathrm{Im} f_{k-1-M_S})\otimes A(s_{k-1-M_S}) \quad K = 2 M_S +1 ,\\
        V =& \sum_{k=1}^{K} v_k, \quad V_{k} = \sum_{k'=1}^{k-1} v_{k'} .
    \end{align}
    The total error from this source over $M=T/\tau$ steps is then 
    \begin{align}
           \norm{\tilde{\rho}_{M,s}-\bar{\rho}_{M}}_1 \leq T (\tau C_{T,1} + \tau^2 C_{T,2}).\label{eqn:quad_trott_error}
    \end{align}
\end{proposition}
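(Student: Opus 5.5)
The plan is to reduce the channel-level error to an operator-level error of a symmetric (second-order) product formula, and then use the ancilla parity argument that already appeared in \cref{prop:dilation_error} and \cref{lem:quad_operator} to gain one extra power of $\sqrt{\tau}$. Set $t=\sqrt{\tau}$ and work in the rotated frame of \cref{sec:errors}, so that $\tilde{\rho}_{1,s}$ and $\bar{\rho}_1$ differ only by replacing $U(t):=e^{-itV}$, with $V=\sum_{k=1}^K v_k=\tilde{K}_S$, by the palindromic product $S(t):=\prod_{k=1}^{K} e^{-\frac{it}{2}v_k}\prod_{k=K}^{1}e^{-\frac{it}{2}v_k}$. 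The common factor $e^{-iH\tau}$ drops out by unitary invariance.

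First I would note that $S$ is built recursively. With $S_1=\mathbb{1}$ and $S_{k+1}(t)=e^{-\frac{it}{2}v_k}S_k(t)e^{-\frac{it}{2}v_k}$ (up to the order convention of \cref{eqn:W_definition}), we get $S=S_{K+1}$, and the exact counterparts are $U_{k}(t)=e^{-itV_k}$. I would then telescope,
\begin{align}
S_{K+1}-U_{K+1}=\sum_{k=1}^{K}\Big(\prod_{k'>k}(\cdots)\Big)\Big[e^{-\frac{it}{2}v_k}U_k e^{-\frac{it}{2}v_k}-U_{k+1}\Big]\Big(\prod_{k'>k}(\cdots)\Big),
\end{align}
where the outer factors are unitaries involving $v_{k'}$, $k'>k$. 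Each bracket is a single symmetric Strang splitting of $V_k+v_k=V_{k+1}$. For each $k$, I would use the integral representation of \cref{sec:duhamel}, differentiating twice more as in the proof of \cref{prop:splitting_error}, to write the bracket as $t^3E_k+R_k(t)$. Here $E_k=c_1\ad^2_{V_k}v_k+c_2\ad^2_{v_k}V_k$ is the standard third-order Strang term, with coefficients $1/12$ and $-1/24$ up to sign and factors of $i$. The remainder satisfies $\norm{R_k}\le t^4\big(\tfrac{1}{24}\norm{\ad^3_{V_k}v_k}+\tfrac{1}{96}\norm{\ad^3_{v_k}V_k}\big)$.

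The key step is the channel estimate. Write $S=U+\Delta$ with $\Delta=t^3E+R$, where $E$ collects the $E_k$ transported by the outer unitaries. Then
\begin{align}
S\tilde\rho(0)S^\dagger-U\tilde\rho(0)U^\dagger=\Delta\tilde\rho(0)U^\dagger+U\tilde\rho(0)\Delta^\dagger+\Delta\tilde\rho(0)\Delta^\dagger .
\end{align}
Every $v_k$ contains a single Pauli $\sigma_x$ or $\sigma_y$ on the ancilla, so $E_k$, being a cubic commutator expression, is block off-diagonal. Hence $\Tr_a[E\,\tilde\rho(0)]=0$, exactly as in \cref{eqn:quad_parity}. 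The surviving $t^3$ contributions therefore carry either a factor $U-\mathbb{1}$ or a factor from transporting $E_k$ through the outer unitaries. Each of these costs one more power of $t$, bounded by $t\norm{V_{k+1}}$, $t\norm{V_k}$, $t\norm{v_k}$ and $t\sum_{k'>k}\norm{v_{k'}}$. This is what I expect to produce the bracket $\big(11\norm{V_{k+1}}+3\norm{V_k}+3\norm{v_k}+8\sum_{k'>k}\norm{v_{k'}}\big)$ multiplying $\frac{1}{24}\norm{\ad^2_{V_k}v_k-\frac12\ad^2_{v_k}V_k}$. Together with the $R_k$ terms, this yields $t^4C_{T,1}=\tau^2C_{T,1}$.

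The quadratic term $\Delta\tilde\rho\Delta^\dagger$ has no parity cancellation. At leading order it gives $t^6\norm{E}^2$, and bounding each $\norm{E_k}\le\frac1{12}\big(\norm{\ad^2_{V_k}v_k}+\frac12\norm{\ad^2_{v_k}V_k}\big)$ produces the $\tau^3C_{T,2}$ term. I would absorb higher cross terms into the $C_{T,1}$ bracket.

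The main obstacle is the bookkeeping in the key step. One has to keep exact track of how many $t$-factors each transported or expanded piece contributes, so that the numerical constants $11,3,3,8$ come out. I would handle this by applying the subtract-and-add trick of \cref{lem:quad_operator} systematically to each $k$-term. Finally, contractivity (\cref{eqn:semigroup_contractive}) extends the single-step bound linearly over $M=T/\tau$ steps, which gives \cref{eqn:quad_trott_error}.
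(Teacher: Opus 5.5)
Your skeleton matches the paper's proof. You set $t=\sqrt{\tau}$, telescope over the $K$ single Strang splittings $V_{k+1}=V_k+v_k$, use the Childs et al.\ integral remainder, subtract the block-off-diagonal $O(t^3)$ piece killed by $\Tr_a$, and gain one extra power of $t$ from the dressing unitaries. The gap is in \emph{what} you telescope. You telescope the unitaries, $S-U=\sum_k\Delta_k$, and expand $S\tilde\rho S^\dagger$ around $U$, so the quadratic term is $\Delta\tilde\rho\Delta^\dagger$ with $\Delta=\sum_k\Delta_k$. Its pieces $\Delta_j\tilde\rho\Delta_k^\dagger$ with $j\neq k$ are two block-off-diagonal operators around a block-diagonal state. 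They are therefore block diagonal and survive $\Tr_a$. Set $c_k:=\frac{1}{12}\bigl(\norm{\ad^2_{V_k}v_k}+\frac12\norm{\ad^2_{v_k}V_k}\bigr)$. These pieces are then only bounded by $t^6c_jc_k$, and once $U$ has been split off they cannot be absorbed into $t^4C_{T,1}$ uniformly in $t$. Your route therefore gives $t^6\bigl(\sum_kc_k\bigr)^2$ instead of $t^6\sum_kc_k^2=\tau^3C_{T,2}$. The step ``bounding each $\norm{E_k}$ produces $\tau^3C_{T,2}$'' silently replaces $\norm{E}^2\le\bigl(\sum_k\norm{E_k}\bigr)^2$ by $\sum_k\norm{E_k}^2$.

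The fix, which is the paper's argument, is to telescope the \emph{states}. Let $\hat S_k$ carry the Strang factors $e^{-\frac{it}{2}v_{k'}}$, $k'\ge k$, on both sides of the exact $e^{-itV_k}$, so $\hat S_1=S$ and $\hat S_{K+1}=U$. Then write $\Tr_a[S\tilde\rho S^\dagger-U\tilde\rho U^\dagger]=\sum_k\Tr_a[\hat S_k\tilde\rho\hat S_k^\dagger-\hat S_{k+1}\tilde\rho\hat S_{k+1}^\dagger]$. After undoing the outer factors, the $k$-th term contains one splitting, $e^{-\frac{it}{2}v_k}e^{-itV_k}e^{-\frac{it}{2}v_k}=e^{-itV_{k+1}}+A_k$, set against the \emph{exact} $e^{itV_{k+1}}$. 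Its quadratic part is $A_k(\cdot)A_k^\dagger$ on a unit-trace-norm state, separately for each $k$, and the sum is $\sum_k\norm{A_k}^2\le\tau^3C_{T,2}$. In your language, $U+\sum_{j>k}\Delta_j$ is regrouped into the exact unitary $\hat S_{k+1}$. Its deviation from $\mathbb{1}$ costs only $t\bigl(\norm{V_{k+1}}+\sum_{k'>k}\norm{v_{k'}}\bigr)$.

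Relatedly, do not write the Strang error as $t^3E_k+R_k$ with a fixed $E_k$. The unitary dressing of the $\ad^2$ integrand then lands in $R_k$, so your $\ad^3$-only bound on $\norm{R_k}$ cannot hold. The paper keeps the dressing in $A_k^{(2)}=e^{-itV_{k+1}}\int_0^t\frac{t_1^2}{2}T(t_1)\,dt_1$, where $T(t_1)$ is the $\ad^2$ integrand sandwiched between $t_1$-dependent unitaries. The difference $T(t_1)-T(0)$ gives the $3(\norm{V_{k+1}}+\norm{V_k}+\norm{v_k})$ part of the bracket. Moving the outer frame, including $\hat S_{k+1}$, to the identity gives the $8(\norm{V_{k+1}}+\sum_{k'>k}\norm{v_{k'}})$ part. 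The $\ad^3$ terms come only from $A_k^{(1)}$, with $\norm{A_k^{(1)}}\le\frac{t^4}{48}\norm{\ad^3_{V_k}v_k}+\frac{t^4}{192}\norm{\ad^3_{v_k}V_k}$. The constants $\frac1{24},\frac1{96}$ in $C_{T,1}$ already include the factor $2$ from the two linear terms, so your stated $R_k$ bound would double them.
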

Here the adjoint denotes the commutator,
\begin{align}
    \ad_A(B) = [A,B] .
\end{align}

\begin{proof}
We are looking to bound $ \norm{\tilde\rho_{1,s} - \bar\rho_1}_1$. After absorbing the coherent time evolution into the norm we are left with
\begin{align}
    \norm{\tilde\rho_{1,s} - \bar\rho_1}_1
    = \Big\| \Tr_a \Big[ e^{-itV} \, \tilde\rho \, e^{itV}
        - \Big( \prod_{k=K}^{1} e^{-i \frac{t}{2} v_k} \prod_{k=1}^{K} e^{-i \frac{t}{2} v_k} \Big)
        \, \tilde\rho \,
        \Big( \prod_{k=K}^{1} e^{i \frac{t}{2} v_k} \prod_{k=1}^{K} e^{i \frac{t}{2} v_k} \Big)
        \Big] \Big\|_1 .
    \label{eq:tr_p31f_1}
\end{align}

\paragraph{Telescope over partial product formulas.}
We define 
\begin{align}
    S_k(t) :=
        \Big( \prod_{k'=K}^{k} e^{-i\frac{t}{2}v_{k'}} \Big) \,
        e^{-it V_k} \,
        \Big( \prod_{k'=k}^{K} e^{-i\frac{t}{2}v_{k'}} \Big) ,
    \qquad S_1 = \prod_{k=K}^{1} e^{-i \frac{t}{2} v_k} \prod_{k=1}^{K} e^{-i \frac{t}{2} v_k} , \quad S_{K+1} = e^{-itV} ,
\end{align}
and split the Trotterization into K individual steps by inserting  $\pm \sum_{k=2}^{K} S_k \, \tilde\rho \, S^\dagger_k$:
\begin{align}
    \norm{\tilde\rho_{1,s} - \bar\rho_1}_1
    = \Big\| \sum_{k=1}^{K} \Tr_a \Big[
        S_k(t) \, \tilde\rho \, S^\dagger_k
        - S_{k+1}(t) \, \tilde\rho \, S^\dagger_{k+1}
        \Big] \Big\|_1 .
\end{align}

\paragraph{ Add traceless terms to ensure $O(t^4)$.}
The second-order Trotter formula is $O(t^3)$, but we gain an additional order because the leading term in $t$ is traceless. In the proof we capture this by inserting the term $D_k$ defined as
\begin{align}
    D_k := \int_0^{t} \! dt_1 \, \frac{t_1^2}{2}
        \left[ \ad^2_{-iV_k} \Big( -\frac{i}{2} v_k \Big)
             + \ad^2_{i/2 v_k} \big( i V_k \big) \right] ,
    \label{eq:D_def}
\end{align}
which is off-diagonal and therefore stays traceless when multiplied by the dilated density matrix:
\begin{align}
    \norm{\tilde\rho_{1,s} - \bar\rho_1}_1 = \Big\| \sum_{k=1}^{K} \Tr_a \Big[
        S_k \, \tilde\rho \, S^\dagger_k
        - S_{k+1} \, \tilde\rho \, S^\dagger_{k+1}
        - D_k \tilde\rho - \tilde\rho \, D^\dagger_k \Big] \Big\|_1 .
\end{align}

\paragraph{ Remove the trace and rotate.}
We use $\norm{\Tr_a [\,\cdots]}_1 \leq \norm{[\,\cdots]}_1$, pull the norm into the sum, and
rotate the $k$-th term by unitary transformation $U_k := \prod_{k'=k+1}^{K} e^{i\frac{t}{2}v_{k'}}$ from the
left and $U_k^\dagger$ from the right. Absorbing the remaining unitaries into the density matrix
\begin{align}
    \tilde\rho_k(t) := \left( \prod_{k'=k+1}^{K} e^{-i\frac{t}{2}v_{k'}} \right)\;
        \tilde\rho \; \left( \prod_{k'=K}^{k+1} e^{+i\frac{t}{2}v_{k'}} \right)
\end{align}
we arrive at
\begin{align}
    \norm{\tilde\rho_{1,s} - \bar\rho_1}_1 \leq \sum_{k=1}^{K} \Big\|
        S_{2,k} \, \tilde\rho_k(t) \, S^\dagger_{2,k}
      - e^{-i t V_{k+1}} \, \tilde\rho_k(t) \, e^{i t V_{k+1}}
      - U_k \big( D_k \tilde\rho + \tilde\rho \, D^\dagger_k \big) U_k^\dagger
      \Big\|_1 ,
\end{align}
where we introduced two-term product formula for each splitting $V_{k+1} = V_k + v_k$
\begin{align}
    S_{2,k}(t) &:= e^{-i\frac{t}{2}v_k} \, e^{-i t V_k} \, e^{-i\frac{t}{2}v_k} .
\end{align}
We use the known result for the second-order Trotter formula from Childs et al. \cite{childs2021theory}, equation  (L 4):
\begin{align}
    S_{2,k}(t) &= e^{-i t V_{k+1}} + A_k(t) , \\
    A_k(t) &= \int_0^{t} \!\! dt_1 \int_0^{t_1} \!\! dt_2
        \int_0^{t_2} \!\! dt_3 \;
        e^{-i(t-t_1)V_{k+1}} \, e^{-i\frac{t_1}{2}v_k} \nonumber \\
    &\qquad \times \left[ e^{-it_3 \ad_{V_k}} \ad^2_{-i V_k}
                \Big( -\frac{i}{2}v_k \Big)
             + e^{i\frac{t_3}{2}\ad_{v_k}} \ad^2_{\frac{i}{2}v_k}
                \big( i V_k \big) \right] e^{-it_1V_k} e^{-i t_1/2 v_k}.
\end{align}
Inserting this the square terms cancel, the two cross terms are adjoints of each other, and $\norm{\tilde\rho_k(t)}_1 = 1$, so we are left with
\begin{align}
    \norm{\tilde\rho_{1,s} - \bar\rho_1}_1 \leq \sum_{k=1}^{K} \Big[
        2 \Big\| A_k(t) \, \tilde\rho_k(t) \, e^{i t V_{k+1}}
        - U_k D_k \tilde\rho \, U_k^\dagger \Big\|_1
        + \norm{A_k(t)}^2 \Big] .
\end{align}

\paragraph{ Split $A_k$.}
Writing the integrand of $A_k$ as its value at $t_3 = 0$ plus the integral of its
derivative gives $A_k = A^{(1)}_k + A^{(2)}_k$ with
\begin{align}
    A^{(1)}_k(t) &= \int_0^{t}\!\! dt_1 \int_0^{t_1}\!\! dt_2
        \int_0^{t_2}\!\! dt_3 \int_0^{t_3}\!\! dt_4 \;
        e^{-i(t-t_1)V_{k+1}} e^{-i\frac{t_1}{2}v_k} \nonumber \\
    &\qquad \times \left[ e^{-it_4 \ad_{V_k}} \ad^3_{-i  V_k}
            \Big( -\frac{i}{2}v_k \Big)
        + e^{i\frac{t_4}{2}\ad_{v_k}} \ad^3_{\frac{i}{2}v_k}
            \big( i  V_k \big) \right]  e^{-it_1V_k} e^{-i t_1/2 v_k}, \\
    A^{(2)}_k(t) &= e^{-itV_{k+1}} \int_0^{t}\!\! dt_1 \, \frac{t_1^2}{2} \, T(t_1) , \\
    T(t) &:= e^{itV_{k+1}} \, e^{-i\frac{t}{2}v_k}
        \left[ \ad^2_{-i  V_k} \Big( -\frac{i}{2}v_k \Big)
        + \ad^2_{\frac{i}{2}v_k} \big( i  V_k \big) \right]
        e^{-i t V_k} \, e^{-\frac{i}{2}t v_k} .
\end{align}
$A^{(1)}_k = O(t^4)$, so $A^{(2)}_k = O(t^3)$ has the leading order time dependence, which will cancel against the corresponding term in $U_k D_k \tilde\rho \, U_k^\dagger$:
\begin{equation}
    \begin{aligned}
    \norm{\tilde\rho_{1,s} - \bar\rho_1}_1 \leq \sum_{k=1}^{K} \Big[
    & \; \underbrace{2 \norm{A^{(1)}_k(t) \, \tilde\rho_k(t) \, e^{i t V_{k+1}}}_1}_{\circled{1}}
    + \underbrace{2 \Big\| A^{(2)}_k(t) \, \tilde\rho_k(t) \, e^{i t V_{k+1}}
        - U_k D_k \tilde\rho \, U_k^\dagger \Big\|_1}_{\circled{2}}
    + \underbrace{\norm{A_k(t)}^2}_{\circled{3}} \Big] .
    \end{aligned}
\end{equation}

\paragraph{ Terms $\circled{1}$ and $\circled{3}$.}
These terms can be evaluated by using the triangle inequality, pulling the norm under the integrals and using the submultiplicativity of the trace norm (\ref{eqn:submultiplicativity}), which offloads the trace norm onto the density matrix, leaving us with the following spectral norm bounds:
\begin{align}
    \circled{1} &\leq 2 \cdot \frac{t^4}{24}
        \left( \norm{\ad^3_{V_k} \frac{v_k}{2}} 
             + \norm{\ad^3_{\frac{v_k}{2}} V_k} \right)
    = \frac{t^4}{24} \left( \norm{\ad^3_{V_k} v_k}
             + \frac{1}{4} \norm{\ad^3_{v_k} V_k} \right) , \\
    \circled{3} &\leq \left[ \frac{t^3}{12} \norm{\ad^2_{V_k} v_k}
        + \frac{t^3}{24} \norm{\ad^2_{v_k} V_k} \right]^2
    = \frac{t^6}{144} \left[ \norm{\ad^2_{V_k} v_k}
        + \frac{1}{2} \norm{\ad^2_{v_k} V_k} \right]^2 .
\end{align}

\paragraph{ Term $\circled{2}$.}
Using the definition of $D_k$ (\ref{eq:D_def}) we can express it in terms of $T(t)$
\begin{align}
    U_k D_k \tilde\rho \, U_k^\dagger = \int_0^t \!\! dt_1 \, \frac{t_1^2}{2}
    \prod_{k'=k+1}^{K} e^{i\frac{t}{2}\ad_{v_{k'}}} \big( T(0) \tilde\rho \big) .
\end{align}
Inserting this and adding a zero we obtain the terms 
\begin{align}
    \circled{2} = 2 \Big\| \int_0^{t} \!\! dt_1 \, \frac{t_1^2}{2} \Big[ &
        \underbrace{e^{-i t \ad_{V_{k+1}}}
            \Big( \big( T(t_1) - T(0) \big) \tilde\rho_k(t) \Big)}_{\circled{2.1}}
        \nonumber \\
    &\qquad + \underbrace{e^{-i t \ad_{V_{k+1}}} \big( T(0) \tilde\rho_k(t) \big)
        - \prod_{k'=k+1}^{K} e^{i\frac{t}{2}\ad_{v_{k'}}}
            \big( T(0) \tilde\rho \big)}_{\circled{2.2}} \Big] \Big\|_1 .
\end{align}
For $\circled{2.1}$, let $f(t) := T(t) - T(0)$, $f(0) = 0$. Writing $f(t) = \int_0^t f'(t_1) \, dt_1$ and pulling the norm under the integral gives:
\begin{align}
    \circled{2.1} &\leq 2 \int_0^{t} \!\! dt_1 \, \frac{t_1^2}{2}
        \int_0^{t_1} \!\! dt_2 \, \Big\| \frac{1}{2} \ad^2_{V_k} v_k
        - \frac{1}{4} \ad^2_{v_k} V_k \Big\|
        \Big( \norm{V_{k+1}} + \norm{V_{k}}
        +\norm{v_k}
        \Big) \\
        &= \frac{t^4}{16} \Big\| 2 \ad^2_{V_k} v_k
        - \ad^2_{v_k} V_k \Big\| \Big( \norm{V_{k+1}} + \norm{V_{k}}
        +\norm{v_k}
        \Big).
\end{align}

For $\circled{2.2}$ it is again easy to see that the integrand vanishes for
$t=0$, so we can also rewrite it in the form $f(t) = \int_0^t f'(t_1) \, dt_1$.
Pulling the norm under the integral and into the commutators gives
\begin{align}
    \circled{2.2} &\leq \frac{t^4}{6} \Big\| 2 \ad^2_{V_k} v_k
        - \ad^2_{v_k} V_k \Big\| \left(\norm{V_{k+1}} + \sum_{k'=k+1}^K \norm{v_{k'}} \right) .
\end{align}

\paragraph{Putting it all together}
Combining all the terms we arrive at the bound:
\begin{equation}
    \begin{aligned}
    \norm{\tilde\rho_{1,s} - \bar\rho_1}_1 \leq \sum_{k=1}^{K} \Big[
    & \; \frac{t^4}{24} \Big( \norm{\ad^3_{V_k} v_k}
        + \frac{1}{4}\norm{\ad^3_{v_k} V_k} \Big)
        + \frac{t^6}{144} \Big[ \norm{\ad^2_{V_k} v_k}
        + \frac{1}{2} \norm{\ad^2_{v_k} V_k} \Big]^2 \\
    + & \; \frac{t^4}{24} \Big\| \ad^2_{V_k} v_k
        -  \frac{1}{2}\ad^2_{v_k} V_k \Big\|
        \Big( 11 \norm{V_{k+1}} + 3 \norm{V_k} + 3 \norm{v_k} + 8\sum_{k'=k+1}^K \norm{v_{k'}} \Big)\Big] .
    \end{aligned}
\end{equation}
\end{proof}

\subsection{$H$ evolution Trotter error}
\label{sec:H_trott_err}
$H$ evolution enters the algorithm both through the coherent evolution and to implement the filter integral, with a total $H$ evolution time  $T_{\text{unitary}} = T + 2 M (2M_S+1) \tau_S = \tilde{O}(T^2)$, with $M_S\cdot M \cdot \tau_S $-time evolution dominating the overall error.

\begin{proposition}
    \label{prop:H_trotter_error}
    The Trotter error for the $H$ evolution over M Lindblad steps is bounded by
    \begin{align}
    \varepsilon_{H,\text{trott}} &= 2 T \tau_{T,1}^2 C_H + 4 M (2M_S+1) \tau_S \tau_{T,2}^2 C_H , \label{eqn:H_trott_err_scaling}
    \end{align}
    where $C_H$ is the error constant for the second-order Trotter formula and $\tau_{T,1}$, $\tau_{T,2}$ are the Trotter step sizes for the coherent and dissipative parts respectively.
\end{proposition}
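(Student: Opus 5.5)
The plan is to bound the second-order Trotter error of each individual $H$-evolution segment in spectral norm, convert it into a trace-norm error on the state, and then count how much $H$-evolution time occurs per Lindblad step. The basic input is the standard commutator bound for the second-order product formula \cite{childs2021theory}. For a single Trotter step of size $\delta$ with $H=\sum_{\gamma=1}^{\Gamma} H_\gamma$,
\begin{align}
\norm{S_2(\delta) - e^{-i\delta H}} \leq \delta^3 C_H ,
\end{align}
where $C_H$ is the nested-commutator constant in \cref{eqn:general_trotter_constant}, with prefactors $1/12$ and $1/24$. A unitary error $\norm{U-V}\le\epsilon$ gives a trace-norm error on any state of at most $\tnorm{U\rho U^\dagger - V\rho V^\dagger}\le 2\epsilon$. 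To see this, write the difference as $(U-V)\rho U^\dagger + V\rho(U-V)^\dagger$ and apply Hölder's inequality from \cref{sec:trace_norm}. This step is where the factor of $2$ comes from.

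Next I count the segments. The coherent evolution $e^{-iH\tau}$ in each Lindblad step is split into $M_{T,1}$ steps of size $\tau_{T,1}$. By telescoping, the unitary error is at most $M_{T,1}\tau_{T,1}^3 C_H = \tau\,\tau_{T,1}^2 C_H$, so the state error per Lindblad step is $2\tau\tau_{T,1}^2C_H$. Summing over the $M$ Lindblad steps gives $2T\tau_{T,1}^2C_H$.

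For the dissipative part I use the compiled form of $W(\sqrt{\tau})$ from \cref{sec:algorithm_description}. This form contains $2(2M_S+1)$ factors $e^{\pm iH\tau_S}$, one forward and one backward sweep over the quadrature points. Each factor is split into $M_{T,2}$ steps of size $\tau_{T,2}$, giving unitary error $\tau_S\tau_{T,2}^2C_H$ per factor. The backward factors $e^{+iH\tau_S}$ have the same bound, since $S_2(-\delta)$ is the adjoint of $S_2(\delta)$ and the norm is unchanged. Telescoping through the product $W$ uses $\norm{\prod U_i - \prod V_i}\le\sum\norm{U_i-V_i}$ and the fact that all the intermediate $\tilde{A}_\ell$ factors are unitary. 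This gives $\norm{W - W_\text{trott}}\le 2(2M_S+1)\tau_S\tau_{T,2}^2C_H$. The partial trace over the ancilla does not increase the trace norm, so after doubling, the state error per step is $4(2M_S+1)\tau_S\tau_{T,2}^2C_H$. Summing over the $M$ steps gives the second term.

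Finally, by the contractivity argument of \cref{sec:errors}, the errors of the $M$ CPTP steps add linearly. The outer conjugation $e^{\mp iHS_S}$ has already been absorbed into $\bar\rho$, so it contributes nothing. The main point to check is that the per-factor bound really scales as $\tau_S\tau_{T,2}^2$ and not as a cruder product. The rest is bookkeeping, in particular making sure both sweeps are counted, which is the origin of the factor $4$.
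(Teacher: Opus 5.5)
Your proposal is correct and takes essentially the same route as the paper. You bound each segment by the second-order Trotter error in spectral norm and convert it to a trace-norm error with a factor of $2$; counting the $2(2M_S+1)$ factors $e^{\pm iH\tau_S}$ in $W(\sqrt{\tau})$ then gives $4(2M_S+1)$, and contractivity accumulates the error linearly over $M$ steps. You also spell out details the paper leaves implicit, namely the telescoping through the unitary $\tilde{A}_\ell$ factors and the adjoint argument for the backward evolutions $e^{+iH\tau_S}$.
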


\begin{proof}
    If we perform coherent time evolution then this gives us for each time step $\tau$:
\begin{align}
    \norm{e^{-iH \tau} \rho e^{iH \tau} - \delta(\tau) \rho \delta^\dagger(\tau)}_1 \leq 2 \norm{e^{-iH \tau}-\delta(\tau)} ,
\end{align}
where $\delta$ is the chosen product formula \cite{childs2021theory}.
We also use unitary time evolution to implement the dissipative part of the algorithm, where we find
\begin{align}
    \norm{\mathcal{W}(\tau)[\rho] - \bar{\mathcal{W}}(\tau)[\rho]}_1 \leq 4(2M_S+1) \norm{e^{iH \tau_S}-\delta(\tau_S)} .
\end{align}
Using the Trotter steps $\tau_{T,1} = \frac{\tau}{M_{T,1}}$ and $\tau_{T,2} = \frac{\tau_S}{M_{T,2}}$ for the coherent and dissipative parts respectively, we obtain for the Trotter error accumulated over $M$ steps for a system Hamiltonian split into $\Gamma$ parts,
\begin{align}
    H = \sum_{\gamma=1}^{\Gamma} H_\gamma ,
\end{align}
by the standard result for the second-order Trotter formula \cite{childs2021theory}:
\begin{align}
    \varepsilon_{H,\text{trott}} &= 2 M \norm{e^{-iH \tau}-\delta(\tau)}  + 4 M (2M_S+1) \norm{e^{iH \tau_S}-\delta(\tau_S)} \\
    &= 2 T \tau_{T,1}^2 C_H + 4 M (2M_S+1) \tau_S \tau_{T,2}^2 C_H ,
\end{align}
where $C_H$ is the error constant
\begin{align}
C_H := \frac{1}{12} \sum_{\gamma_1=1}^{\Gamma} \left\| \left[ \sum_{\gamma_3=\gamma_1+1}^{\Gamma} H_{\gamma_3}, \left[ \sum_{\gamma_2=\gamma_1+1}^{\Gamma} H_{\gamma_2}, H_{\gamma_1} \right] \right] \right\| + \frac{1}{24} \sum_{\gamma_1=1}^{\Gamma} \left\| \left[ H_{\gamma_1}, \left[ H_{\gamma_1}, \sum_{\gamma_2=\gamma_1+1}^{\Gamma} H_{\gamma_2} \right] \right] \right\|. \label{eqn:H_trott_const}
\end{align}
\end{proof}

For the case of the Hubbard model with a 2-partition tiling of the hopping term we can upper-bound the Trotter constant with the results for the 2D square lattice from \cite{campbell2022early}, $C_H \le W_\text{split} + W_\text{tiling}$. Here $W_\text{split}$ refers to the Trotter constant of splitting the Hamiltonian into hopping and interaction terms,
\begin{align}
    W_\text{split} := \frac{U^2t_\text{hop}}{24} \norm{R}_1 +\frac{U t_\text{hop}^2}{6} n_\text{sites} (\sqrt{5}+8) ,
\end{align}
where $R$ is the adjacency matrix of the hopping terms. $W_\text{tiling}$ is the Trotter error constant for splitting the hopping terms into two sets of $S_1$ tiles,
\begin{align}
    W_\text{tiling} := \frac{t_\text{hop}^3}{12} \norm{\left[ [R_1,R_2],R_2 \right]}_1 + \frac{t_\text{hop}^3}{24} \norm{\left[ [R_1,R_2],R_1 \right]}_1 ,
\end{align}
with $R_1$ and $R_2$ being the adjacency matrices of the two hopping partitions. Tighter bounds on the worst-case Trotter error are possible with numerical estimators for this model \cite{bay2025fault, blunt2025monte}; since the empirical parameter choice already takes care of this error, we do not pursue these improvements here.

\subsection{$\tilde{A}$ Trotter error}
\label{sec:A_trott_err}

\begin{proposition}
\label{prop:A_trott_err}
    For first-order Trotter splitting for $\tilde{A}$ we find for the $M$-step error
\begin{align}
    \varepsilon_{A, \text{trott}} =& \tau \tau_S^2 M C_{\tilde{A}} ,
\end{align}
with 
\begin{align}
C_{\tilde{A}} :=  \norm{A}^2 \sum_{l=-M_S}^{M_S} |Imf(s_l)||Ref(s_l)| .
\end{align}
\end{proposition}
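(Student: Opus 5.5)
The plan is to reduce the channel error to an operator-norm error for each of the $2(2M_S+1)$ exponentials $\tilde{A}_\ell(\sqrt{\tau}) = e^{-i\frac{\sqrt{\tau}}{2}\sigma_\ell\otimes A}$ appearing in $W(\sqrt{\tau})$. Each of these is replaced by its first-order splitting into an $X_\text{anc}$ factor and a $Y_\text{anc}$ factor. I would then bound each splitting with the standard first-order commutator bound and sum the contributions.

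\textbf{Step 1 (reduction to unitaries).} For unitaries $U,\bar U$ we have
\begin{align}
\tnorm{U\tilde\rho U^\dagger-\bar U\tilde\rho\bar U^\dagger}\le 2\norm{U-\bar U}.
\end{align}
The partial trace and the surrounding conjugation by $e^{-iH\tau}$ do not increase the trace norm. Writing $W$ and its Trotterized version $\bar W$ as products of the same unitaries, except for the $\tilde A_\ell$ factors, a telescoping sum together with the unitary invariance of the spectral norm gives
\begin{align}
\norm{W-\bar W}\le \sum_{\text{both products}}\ \sum_{l=-M_S}^{M_S}\norm{\tilde A_l-S_1(\tilde A_l)}.
\end{align}
This yields a factor $2$ from the two products $\prod_{\ell}$ and $\overleftarrow{\prod_\ell}$. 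Combined with the factor $2$ from the channel reduction, the single-step error is at most $4\sum_l \norm{\tilde A_l - S_1(\tilde A_l)}$.

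\textbf{Step 2 (single splitting).} Set $t=\sqrt{\tau}/2$, $P_l = w_l\re f(s_l)\,\sigma_x\otimes A$ and $Q_l = w_l\im f(s_l)\,\sigma_y\otimes A$. The first-order Trotter bound, obtained from the integral representation of \cref{sec:duhamel} or from \cite{childs2021theory}, gives
\begin{align}
\norm{e^{-it(P_l+Q_l)}-e^{-itP_l}e^{-itQ_l}}\le \frac{t^2}{2}\norm{[P_l,Q_l]}.
\end{align}
The commutator simplifies because the system parts coincide:
\begin{align}
[P_l,Q_l]=w_l^2\re f\,\im f\,[\sigma_x,\sigma_y]\otimes A^2=2i\,w_l^2\re f\,\im f\,\sigma_z\otimes A^2 .
\end{align}
Hence $\norm{[P_l,Q_l]}\le 2\tau_S^2|\re f(s_l)||\im f(s_l)|\norm{A}^2$, using $w_l\le\tau_S$. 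Each splitting therefore costs at most $\frac{\tau}{4}\tau_S^2|\re f(s_l)||\im f(s_l)|\norm{A}^2$.

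\textbf{Step 3 (assembly).} Multiplying by the factor $4$ from Step 1 and summing over $l$ gives a single-step error of $\tau\tau_S^2 C_{\tilde A}$. By contractivity of the single-step channels, as in the proof of \cref{prop:total_error}, the $M$-step error is $M$ times this, which is $\tau\tau_S^2 M C_{\tilde A}$.

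The only delicate point is the bookkeeping in Step 1. One must check that rotating the $A(s_l)$ into the interaction picture, i.e.\ replacing $A(s_l)$ by $A$ between the $e^{\pm iH\tau_S}$ layers, is an exact unitary rewriting and not an additional approximation. It must also be verified that the ordering of the $X$ and $Y$ factors does not matter for the bound, since only $\norm{[P_l,Q_l]}$ enters. I expect no parity-based improvement to be needed to reach the stated constant, so I would not try to gain a further order from the partial trace.
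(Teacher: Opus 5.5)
Your proposal is correct and follows essentially the same route as the paper. The paper likewise reduces the single-step channel error to $4\sum_{l=-M_S}^{M_S}\norm{\tilde{A}_l(\sqrt{\tau})-\delta_1(\tau)}$, applies the first-order Trotter bound of Childs et al., and multiplies by $M$ at the end. Your explicit evaluation of $[P_l,Q_l]=2i\,w_l^2\operatorname{Re} f(s_l)\operatorname{Im} f(s_l)\,\sigma_z\otimes A^2$ at $t=\sqrt{\tau}/2$ fills in the arithmetic that the paper leaves implicit, and it reproduces the constant $C_{\tilde{A}}$ exactly.
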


\begin{proof}
    The proof proceeds similarly to the filter-integral contribution to the $H$ evolution error:
    \begin{align}
    \varepsilon_{A,\text{trott}} = \norm{\mathcal{W}(\tau)[\rho] - \bar{\mathcal{W}}(\tau)[\rho]}_1 \leq 4\sum_{l=-M_S}^{M_S} \norm{\tilde{A}_l(\sqrt{\tau})-\delta_1(\tau)} ,
\end{align}
with 
\begin{align}
    \tilde{A}(\sqrt{\tau}) &= e^{-i \frac{\sqrt{\tau}}{2} \sigma_l \otimes A} .
\end{align}
Using the first-order Trotter splitting for $\tilde{A}$ \cite{childs2021theory} and summing over all $M$ steps, we obtain the stated error bound:
\begin{align}
    \varepsilon_{A,\text{trott}} =& \tau \tau_S^2 M \norm{A}^2 \sum_{l=-M_S}^{M_S} |\mathrm{Im}f(s_l)||\mathrm{Re}f(s_l)| .
\end{align}
\end{proof}

\subsection{Fixed-point error}
\label{sec:fixed_point_error}

The Lindblad evolution has an associated mixing time, after which any two initial states become similar:
\begin{align}
    \tnorm{e^{\Lind t^{\mathcal{L}}_{\text{mix}}}[\rho-\sigma]} \leq  \frac{1}{2} \tnorm{\rho-\sigma}.
\end{align}
This implies that after some $T^{\mathcal{L}}_{\text{mix}}$ any input state is arbitrarily close to a fixed point of the dynamics. However, one can also directly define the dilated and split channel
\begin{align}
    \mathcal{D}[\rho] := e^{-iH\tau} \Tr_a\!\left[ e^{-i\sqrt{\tau}\tilde K}
    \left( \ket{0}\bra{0} \otimes \rho \right) e^{i\sqrt{\tau}\tilde K} \right] e^{iH\tau} .
\end{align}

As the authors of \cite{ding2024single} note in Appendix C, this channel also fixes the ground state, so under the usual ergodicity and uniqueness assumptions the ground state remains the fixed point of $\mathcal{D}$ with mixing time $T^{D}_{\text{mix}}$. To measure the distance between the ideal ground state and the state reached after $M$ channel applications we can neglect the dilation and Lindblad-splitting errors, although this means that we are no longer bounding the distance to the exact Lindblad evolution, which can mean that $T^{D}_{\text{mix}}$ and $T^{\mathcal{L}}_{\text{mix}}$ may differ. For the rigorous scaling we do not make use of such a simplification, but it is used in the empirical regimes, since in the simulation we can directly evaluate the fixed-point ground-state energy, and not have to control the trace distance to the ideal Lindbladian evolution.

\section{Empirical spectral properties and time steps}
\label{sec:spectral_scaling}

We estimate the spectral properties via power law fitting from small systems, see \cref{fig:spectral_gap_scaling} and \cref{fig:ground_state_energy_scaling}. These then inform the choice of parameters and expected performance for larger systems. Simulations are for 1D chains with open boundary conditions. We obtain exact results up to $10$ sites and then extrapolate. Since these systems are one-dimensional, it would be simple to extend this data using DMRG simulations, but we did not do that here.

We note that a linear relationship between the number of sites and the spectral norm also holds true for more general lattices. The spectral gap, in contrast, depends strongly on the dimension of the lattice. Beyond the crossover point where the extensive error budget $0.005 \cdot n_\text{sites}$ exceeds the 1D gap (around 28 sites for the fit in Fig. 10), $\Delta_\text{eff}$ is set by the error budget, so a smaller gap leaves the cost unchanged and a larger gap can only reduce it. Below the crossover, a smaller gap in higher dimensions would increase the cost.

\begin{figure}
    \centering
    \includegraphics{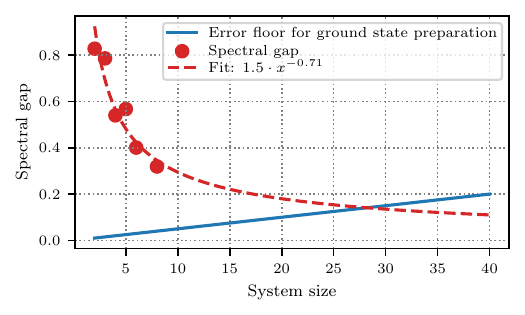}
    \caption{Scaling of the spectral gap with system size for one-dimensional Hubbard models with $U=4$ at half-filling with open boundary conditions. The dashed line shows a fit to this data, used to extrapolate to larger systems. The solid line shows an extensive target accuracy for the energy \cite{campbell2022early}.}
    \label{fig:spectral_gap_scaling}
\end{figure}

\begin{figure}
    \centering
    \includegraphics{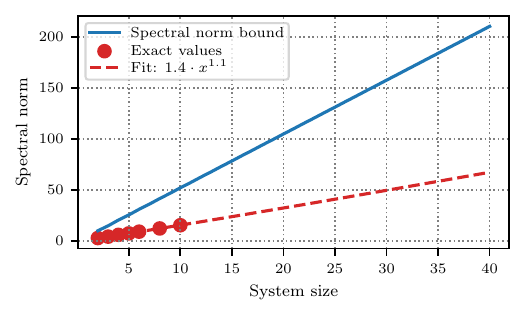}
    \caption{Scaling of the spectral norm, $\norm{H}$, for one-dimensional Hubbard models with $U=4$ and open boundary conditions.}
    \label{fig:ground_state_energy_scaling}
\end{figure}

Finally, we test our prediction that in the empirical regimes the required step size $\tau$ scales inversely proportionally to the mixing time, $T_\text{mix}$, which we can read off for the simulated systems, while being independent of the size of the system. \Cref{fig:time_step_scaling} provides some evidence for this, finding an almost linear relationship between $1/\tau$ and $T_\text{mix}$, while the size of the system quadrupled. From the fit we estimate that the time step required for a single unit of time, $\tau \cdot T_\text{mix}$, will be no smaller than $\tau = 0.15$ for the general scaling and $\tau = 0.25$ for the scaling with known gap.

However, such small systems can only be indicative of the larger lattices that we aim to cost, so it is possible that the step size required for a single time unit does not stay constant with system size, so in this sense our estimates are more of a lower bound to the actual cost. 

\begin{figure}
    \centering
    \includegraphics{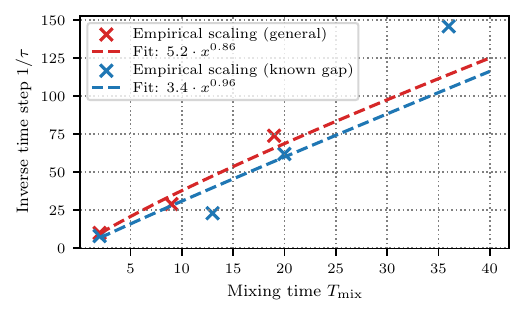}
    \caption{Scaling of the required time step $\tau$ with the mixing time $T_\text{mix}$ for one-dimensional, $U=4$ Hubbard models with open boundary conditions at half-filling.}
    \label{fig:time_step_scaling}
\end{figure}

\end{document}